  \providecommand\BibTeX{{%
    \normalfont B\kern-0.5em{\scshape i\kern-0.25em b}\kern-0.8em\TeX}}}
\theoremstyle{plain}
\newtheorem{theorem}{Theorem}
\newtheorem{lemma}{Lemma}
\newtheorem{corollary}{Corollary}[theorem]
\newtheorem{example}{Example}
\newtheorem{definition}{Definition}
\newtheorem{claim}{Claim}
\DeclareRobustCommand\iff{\;\Longleftrightarrow\;}
\newcommand{\Mech}[1]{Mechanism~\ref{mech:#1}}
\newcommand{\Algo}[1]{Algorithm~\ref{algo:#1}}
\newcommand{\Ex}[1]{Example~\ref{ex:#1}}
\newcommand{\Thm}[1]{Theorem~\ref{thm:#1}}
\newcommand{\snp}{S\&P\,500\xspace}
\newcommand{\AAPL}{\texttt{AAPL}\xspace}
\newcommand{\MSFT}{\texttt{MSFT}\xspace}
\newcommand{\DIS}{\texttt{DIS}\xspace}
\newcommand{\valpham}{\boldsymbol{\alpha}_m}
\newcommand{\valpha}{\boldsymbol{\alpha}}
\newcommand{\vbetan}{\boldsymbol{\beta}_n}
\newcommand{\vbeta}{\boldsymbol{\beta}}
\newcommand{\vomega}{\boldsymbol{\omega}}
\renewcommand{\S}{\boldsymbol{S}}
\newcommand{\vphi}{\boldsymbol{\phi}}
\newcommand{\vpsi}{\boldsymbol{\psi}}
\newcommand{\vp}{\boldsymbol{p}}
\newcommand{\vb}{\boldsymbol{b}}
\newcommand{\vq}{{\boldsymbol{q}}}
\newcommand{\va}{\boldsymbol{a}}
\newcommand{\vgamma}{\boldsymbol{\gamma}}
\newcommand{\vdelta}{\boldsymbol{\delta}}
\newcommand{\order}{\mathcal{O}}
\newcommand{\vf}{\boldsymbol{f}}
\newcommand{\vg}{\boldsymbol{g}}
\newcommand{\vI}{\boldsymbol{I}}
\newcommand{\I}{\mathcal{I}}
\newcommand{\M}{\mathcal{M}}
\newcommand{\K}{\mathcal{K}}
\newcommand{\vone}{\mathbf{1}}
\newcommand{\Cc}{\mathcal{C}}
\newcommand{\F}{\mathcal{F}}
\newcommand{\cP}{\mathcal{P}}
\newcommand{\R}{\mathbb{R}}
\newcommand{\vzero}{\boldsymbol{0}}
\newcommand{\card}[1]{\lvert#1\rvert}
\title[Designing a Combinatorial Financial Options Market]{Designing a Combinatorial Financial Options Market}
\author{Xintong Wang}
\affiliation{Harvard University}
\email{xintongw@seas.harvard.edu}
\author{David M. Pennock}
\affiliation{Rutgers University}
\email{dpennock@dimacs.rutgers.edu}
\author{Nikhil R. Devanur}
\affiliation{Amazon}
\email{ndevanur@gmail.com}
\author{David M. Rothschild}
\affiliation{Microsoft Research}
\email{davidmr@microsoft.com}
\author{Biaoshuai Tao}
\affiliation{Shanghai Jiao Tong University}
\email{bstao@sjtu.edu.cn}
\author{Michael P. Wellman}
\affiliation{University of Michigan}
\email{wellman@umich.edu}
\begin{abstract}
\textit{Financial options} are contracts that specify the right to buy or sell an underlying asset at a \textit{strike price} by an expiration date.
Standard exchanges offer options of predetermined strike values and trade options of different strikes independently, even for those written on the same underlying asset.
Such independent market design can introduce arbitrage opportunities and lead to the thin market problem.
The paper first proposes a mechanism that consolidates and matches orders on standard options related to the same underlying asset, while providing agents the flexibility to specify any custom strike value.
The mechanism generalizes the classic double auction, runs in time polynomial to the number of orders, and poses no risk to the exchange, regardless of the value of the underlying asset at expiration.
Empirical analysis on real-market options data shows that the mechanism can find new matches for options of different strike prices and reduce bid-ask spreads.\looseness=-1

Extending standard options written on a \textit{single} asset, we propose and define a new derivative instrument --- \textit{combinatorial financial options} that offer contract holders the right to buy or sell any linear combination of multiple underlying assets.
We generalize our single-asset mechanism to match options written on different combinations of assets, and prove that optimal clearing of combinatorial financial options is coNP-hard. 
To facilitate market operations, we propose an algorithm that finds the exact optimal match through iterative constraint generation, and evaluate its performance on synthetically generated combinatorial options markets of different scales.
As option prices reveal the market's collective belief of an underlying asset's future value, a combinatorial options market enables the expression of aggregate belief about future correlations among assets.\looseness=-1
\end{abstract}
\keywords{financial options; combinatorial prediction market; market design}
\begin{document}

% Title page for title and abstract only.
%\begin{titlepage}

\maketitle

%\end{titlepage}

% Paper body
\section{Introduction}
Financial options are securities that %provide option holders the 
convey rights to conduct specific trades in the future.
For example, an \snp \emph{call option} with \emph{strike price} 4500 and \emph{expiration date} December 31, 2021, provides the right to buy one share of \snp at $\$4500$ on that day.%
\footnote{For simplicity, examples are given without the typical 100x contract multiplier.}
A standard financial option is a derivative instrument of a \emph{single} underlying asset or index, and its payoff is a function of the underlying variable. 
The example option contract pays $\max\{S-4500,0\}$, where $S$ is the value of the \snp index on the expiration date.%
\footnote{
	Throughout this paper, we restrict to \emph{European options} which can be exercised only at expiration.
	The settlement value is calculated as the opening value of the index on the expiration date or the last business day (usually a Friday) before the expiration date.
	In cash-settled markets, instead of actual physical delivery of the underlying asset, the option holder gets a cash payment that is equivalent to the value of the asset.
} 
Investors trade options to hedge risks and achieve certain return patterns, or to speculate about the movement of the underlying asset price, buying an option when its price falls below their estimate of its expected value. 
Thus, option prices reveal the collective risk-neutral belief distribution of the underlying asset's future value.

Despite the significant volume of trade and interest in option contracts, financial exchanges that support trading of standard options have two limitations that compromise its expressiveness and efficiency.
% First, traders can only choose from a predetermined subset of strike prices offered for options of a specific underlying asset, thus limiting the ability to elicit and recover a full and continuous distribution of the underlying variable.
\textit{First}, the exchange offers only a selective set of markets for options written on a specific underlying asset (i.e., option chain), in which each market features a predetermined strike price and expiration date.
For example, as of this writing, the Chicago Board Options Exchange (CBOE) offers 120 distinct strike prices, ranging from 1000 to 6000 at intervals of 25, 50, or 100, for \snp options expiring by the end of 2021. 
%The prescriptive design refrains traders from specifying any custom strike value, and thus limits the market's ability to elicit and recover a full and continuous distribution of the underlying variable.
While traders can engineer custom contracts (e.g., a \snp call option with a strike price of \$4520) by simultaneously purchasing multiple available options in appropriate proportions, they must monitor several markets to ensure that a bundle can be constructed at a desired price. 
Often, execution risk and transaction costs prevent individual traders from carrying out such strategies. 
As the prescriptive markets prevent traders from conveniently expressing custom strike values, the exchange may fail to aggregate supply and demand requests of greater detail, leading to a loss of economic efficiency.
%This can lead to a loss of economic efficiency and limit the market's ability to elicit and recover a full and continuous distribution of the underlying variable.

\textit{Second}, each market within an option chain independently aggregates and matches orders in regard to a single contract with a specified option type, strike price, and expiration date, despite its interconnectedness to other option markets and their common dependency on the underlying asset.
Such independent market design fails to notify traders of strictly better, cheaper options %match options with different strike prices
and generically allows arbitrage opportunities.
Moreover, investments get diluted across independent markets even when participants are interested in the same underlying asset.
This can lead to the problem of thin markets, where few trades happen and bid-ask spreads become wide.
Even for some of the most actively traded option families, empirical evidence has shown that liquidity can vary much across option types and strikes.
\citeauthor{Cao2010} studied eight years of options trading data and find consistently lower liquidity in puts and deep in-the-money options~\cite{Cao2010}.

\subsection{An Exchange for Standard Financial Options}
The paper first focuses on the design of an exchange for standard financial options to address limitations discussed above.
We propose a mechanism to aggregate and match orders on options across different strike prices that are logically related to the same underlying asset.
As a result, the exchange operates a consolidated market for each underlying security and expiration, where traders can specify custom options of arbitrary strike value.
The mechanism is an exchange that generalizes the double auction; it is not a market maker or any individual arbitrageur. 
It works by finding a match that maximizes net profit subject to zero loss even in the worst case, and thus poses no risk to the exchange regardless of the value of the underlying security at expiration.
We show that the mechanism is computationally efficient, and key market operations (i.e., match and price quotes) can be computed in time polynomial in the number of orders.
%(from the perspective of an exchange designer)
We conduct experiments on real-market options data, using our mechanism to consolidate outstanding orders from independently-traded options markets.
Empirical results demonstrate that the proposed mechanism can match options that the current exchange cannot and provide more competitive bid and ask prices.
%Moreover, any match returned by the mechanism poses no risk to the exchange regardless of the value of the underlying security at expiration.
%We propose a linear program that computationally efficiently matches orders across markets that are logically related to the same underlying asset, while enabling traders to specify any custom option contract (e.g., a S\&P call option with a strike price of 3502 that expires a week from today).
%It runs in time polynomial to the number of orders and poses no risk to the exchange regardless of the value of the underlying security at expiration.
%Empirical analysis conducted on real-market options data shows that our mechanism, by consolidating independently-traded options markets, can indeed find arbitrage and substantially reduce bid-ask spreads.
The improved efficiency and expressiveness may help to aggregate more fine-grained information and recover a complete and fully general probability distribution of the underlying asset's future value at expiration.

\subsection{An Exchange for Combinatorial Financial Options}
In the second part of the work, we generalize standard financial options to define \textit{combinatorial financial options}---derivatives that give contract holders the right to buy or sell \emph{any linear combination} of underlying assets at any strike price.
For example, a call option written on ``$1$\AAPL$+1$\MSFT'' with strike price 300 specifies the right to buy one share of Apple \emph{and} one share of Microsoft at $\$300$ total price on the expiration date, whereas a call on ``$2$\AAPL$-1$\MSFT'' with strike price 50 confers the ability to buy two shares of Apple and sell one share of Microsoft at expiration for a net cost of \$50.
Combinatorial options allow traders to conveniently and precisely hedge their exact portfolio, replicating any payoff functions that standard options can achieve and exponentially more. 
Some of the standard options on mutual funds and stock indices, like the \snp and Dow Jones Industrial Average (DJI), are indeed combinatorial options written on the respectively predefined portfolios. 
The CBOE recently launched options on eleven Select Sector Indices,%
\footnote{\url{https://markets.cboe.com/tradable_products/sp_500/cboe_select_sectors_index_options/}}
each of which can be considered as a combinatorial option of a pre-specified linear combination of stocks representing one economic sector.
The goal of this work is to allow traders to create options on their own custom indices, representing any linear combination of stocks and sectors that they want.
%and generalizes to any custom combination they want.
%Standard options on mutual funds and stock indices like the \snp and Dow Jones Industrial Average (DJI) operate on one specific predefined portfolio. 
%The CBOE has recently launched options on eleven Select Sector Indices,%
%\footnote{\url{https://markets.cboe.com/tradable_products/sp_500/cboe_select_sectors_index_options/}}
%each of which can be considered as a pre-specified linear combination of stocks.

We design an exchange for combinatorial financial options.
%in which agents can bid or offer options written on any linear combination of underlying assets for any strike price
Such a market enables the elicitation and recovery of future correlations (e.g., complimentary or substitution effects) among assets. %DP: I'm not sure you could recover the full joint with linear combinations---in general, their full joint distribution. %(e.g., complimentary or substitution effects).
However, as traders are offered the expressiveness to specify weights (or shares) for multiple underlying assets, new challenges arise and the thin market problem exacerbates.
Opening a new, separate exchange for each new combination of stocks and weights would rapidly grow intractable.
Naively matching buy and sell orders for only the exact same portfolio may yield few or no trades, despite plenty of acceptable trades among orders. %, as we will explain below.
%However,  %and new challenges arise.
%Consolidating options in regard to the same portfolio of assets may no longer be enough, since matching the exact same portfolio in a combinatorial options market becomes very unlikely.

Extending the mechanism that consolidates standard options on a single underlying security, we propose an optimization formulation that can match combinatorial options written on different linear combinations of underlying assets.
%Similarly,
The mechanism maximizes net profit subject to no risk to the exchange, regardless of (any combination of) values of all assets at expiration. 
%The mechanism is a generalized double auction, not a market maker.
We show that the proposed matching mechanism with increased expressiveness, however, comes at the cost of higher computational complexity: the optimal clearing of a combinatorial options market is coNP-hard. 
%The combinatorial options matching problem is novel, with its new form of payoff function, and is not a specialization of any previously studied combinatorial prediction market.

We demonstrate that the proposed mechanism can be solved relatively efficiently by exploiting constraint generation techniques.
We propose an algorithm that finds the exact optimal-matching solution by satisfying an increasing set of constraints iteratively generated from different future values of the underlying assets.
In experiments on synthetic combinatorial orders generated from real-market standard options prices, we show that the matching algorithm terminates quickly, with its running time growing linearly with the number of orders and the size of underlying assets. 
%When we restrict the set of possible combinations, liquidity rises and runtime falls.

\subsection{Roadmap}
Section~\ref{sec:related} discusses related work on option pricing and combinatorial markets.
Section~\ref{sec:notations} introduces notations and background on the current options market.
We first focus on the market design for standard financial options. 
Section~\ref{sec:standard_option} introduces the formal setting of a consolidated options market with limit orders, and proposes a matching mechanism for options across types and strikes (\Mech{single_match}).
Section~\ref{sec:combo_options}, our main contribution, defines a combinatorial financial options exchange (Mechanism \ref{mech:combo_match}), analyzes the computational complexity of optimally clearing such a market (Theorem~\ref{thm:NP-complete} and Theorem~\ref{thm:coNP-hard}), and proposes a constraint generation algorithm to find the exact optimal match (Algorithm~\ref{algo:comb_match}). 
Section~\ref{sec:exp_standard_option} and Section~\ref{sec:exp_combo_option} evaluate the two proposed mechanisms on real-market standard options data and synthetically generated combinatorial options data, respectively.
Section~\ref{sec:conclusion} concludes and discusses possible future directions.

\section{Related Work}
\label{sec:related}
\subsection{Rational Option Pricing}
Our proposed market designs relate to arbitrage conditions that have been studied extensively in financial economics ~\cite{Modigliani1958,Varian1987}.
In short, an arbitrage describes the scenario of ``free lunches''---configurations of prices such that one can get something for nothing.
The matching operation of an exchange (i.e., the auctioneer function) can be considered as arbitrage elimination: matching orders in effect works by identifying combinations of orders that reflect a risk-free surplus (i.e., gains from trade). 
In the classic double auction, a match accepting the highest buy order at, say, \$12 and the lowest sell order at \$10 in effect yields an arbitrage surplus of \$2 that can go to the buyer, the seller, the exchange, or split among them in any way. 

\citeauthor{Merton1973} first investigated the no-arbitrage pricing for options, stating the necessity of convexity in option prices~\cite{Merton1973}.
Other relevant works examine no-arbitrage conditions for options under different scenarios, such as modeling the stochastic behavior of the underlying asset (e.g., the Black-Scholes model)~\cite{bs_model,pricing_discrete_time} and considering the presence of other types of securities (e.g., bonds and futures)~\cite{Garman1976, Ritchken1985}.
The most relevant work to our proposed approach is by \citeauthor{Herzel2005}, who makes little assumption on the underlying process of the asset price and other financial instruments~\cite{Herzel2005}.
The paper proposes a linear program to check the convexity between every strike-price pair; it finds arbitrage opportunities that yield positive cash flows now and no liabilities in the future on European options written on the same underlying security.

Our first contribution on a market design that consolidates standard options generalizes Herzel's by adding an extra degree of freedom to incorporate potential future payoff gain or loss into the matching. %by also allowing a temporary deficit (i.e., a negative cash flow) now, if it is guaranteed to earn it back at the time of option expiration.
To our knowledge, no prior work has defined general combinatorial financial options or investigated the matching mechanism design and complexity for such a market.
\subsection{Combinatorial Market Design}
Much prior work examines the design of combinatorial markets, both exchanges and market makers \cite{Chen2008b,Hanson03}, for different applications including prediction markets with Boolean combinations \cite{Fortnow2005}, permutations \cite{Chen2007}, hierarchical structures \cite{Guo2009}, tournaments \cite{Chen2008a}, and electronic sourcing \cite{Sandholm2007}. 
\citeauthor{DudikEtAl13} show how to employ constraint generation in linear programming to keep complex related prices consistent~\cite{DudikEtAl13}. 
\citeauthor{KroerDudik16} generalize this approach using integer programming~\cite{KroerDudik16}.
%for combinatorial or hierarchical events \cite{Hanson03,Chen2008a,Guo2009,DudikEtAl13}, compound securities markets \cite{Fortnow2005}, and electronic sourcing \cite{Sandholm2007}.
\citeauthor{Rothschild14} investigate misaligned prices for logically related binary-payoff contracts in prediction market, and uncover persistent arbitrage opportunities for risk-neutral investors across different exchanges~\cite{Rothschild14}.

Designing combinatorial markets faces the tradeoff between expressiveness and computational complexity: 
giving participants %the means to express more diverse and nuanced
greater flexibility to express preferences can help to elicit better information and increase economic efficiency, but in the meantime leads to a more intricate mechanism that is computationally harder. %matching and price quoting may take exponential time
Several works have formally described and quantified such tradeoff \cite{Benisch2008,Golovin2007}, and studied how to balance it by exploiting the outcome space structure and limiting expressivity \cite{Chen2008b,XiaPe11,LaskeyEtAl18,Dudik2020}.
%Our work here contributes to the rich literature by extending to the popular financial options market and designing mechanisms to operate such markets.
Our work here contributes to this rich literature by designing a vastly more expressive version of the popular financial options market. 
The state space and payoff function for combinatorial financial options differ from the previously studied combinatorial structures in both meaning and computational complexity. % and designing mechanisms to operate such markets.
The associated matching problem is novel (with its new form of payoff function), and is not a specialization of previously studied combinatorial prediction market.

\section{Background and Notations}
\label{sec:notations}
There are two types of options, referred to as \emph{call} and \emph{put} options.
We denote a call option as $C(S, K, T)$ and a put option as $P(S, K, T)$, which respectively gives the option buyer the right to \emph{buy} and \emph{sell} an underlying asset $S$ at a specified \emph{strike price} $K$ on the \emph{expiration date} $T$.
In the rest of this paper, we omit $T$ from the tuples for simplicity, as the mechanism aggregates options within the same expiration.
%\footnote{100 shares.}

The option buyer decides whether to exercise an option. % at expiration.
%; the option buyer's right is the option seller's obligation.
Suppose that a buyer spends \$8 and purchases a call option, $C(\snp, 4500, 20211231)$.
If the \snp index is \$4700 at expiration, the buyer will pay the agreed strike \$4500, receive the index, and get a \textit{payoff} of \$200 and a \textit{net profit} of \$192 (assuming no time value). 
%\footnote{For simplicity, we disregard the time value of money for now.}
If the \snp price is \$4200, the buyer will walk away without exercising the option.
Therefore, the payoff of a purchased option is 
\[\Psi := \max\{\chi(S-K), 0\},\]
where $S$ is the value of underlying asset at expiration and $\chi \in \{-1, 1\}$ equals 1 for calls and $-1$ for puts.
As the payoff for a buyer is always non-negative, the seller receives a \emph{premium} now (e.g., \$8) to compensate for future obligations.

% how current exchange matches orders
Option contracts written on the same underlying asset, type, strike, and expiration are referred to as an \emph{option series}.
%A standard option exchange (e.g., CBOE) treats different option series separately, having each publicly traded in a distinct and independent continuous double auction (CDA).
Consider options of a single security offering both calls and puts, ten expiration dates and fifty strike prices.
All option series render a total of a thousand markets, with each maintaining a separate \emph{limit order book}.
In such a market, deciding the existence of a transaction takes $\order(1)$ time by comparing the best bid and ask prices, and matching an incoming order can take up to $\order(n_\text{o})$ time depending on its quantity, where $n_\text{o}$ is the number of orders on the opposite side of the order book.

\section{Consolidating Standard Financial Options}
\label{sec:standard_option}
This section proposes a mechanism to consolidate buy and sell orders on standard financial options written on the same underlying asset across different types and strike prices.
The model does not make any assumption on the option's pricing model or the stochastic behavior of the underlying security.  
%We formulate the matching mechanism below.
%
\subsection{Matching Orders on Standard Financial Options}
Consider an option market in regard to a single underlying asset $S$ with an expiration date $T$\@.
Traders can specify orders on such options with any positive strike value.
The market has a set of buy orders, indexed by $m \in \{1, 2, ..., M\}$, and a set of sell orders, indexed by $n \in \{1, 2, ..., N\}$. 
Buy orders are represented by a type vector $\vphi \in \{-1, 1\}^M$ with each entry specifying a put or a call option, a strike price vector $\vp \in \R_{+}^M$, and bid prices $\vb \in \R_{+}^M$.
Sell orders are denoted by a separate type vector $\vpsi \in \{-1, 1\}^N$, a strike vector $\vq \in \R_{+}^N$, and ask prices $\va \in \R_{+}^N$.

The exchange aims to match buy and sell orders submitted by traders.
Specifically, it decides the fraction $\vgamma \in [0, 1]^M$ to sell to buy orders and the fraction $\vdelta \in [0, 1]^{N}$ to buy from sell orders. %at their quoted prices.
We start with a formulation that matches orders by finding a common form of arbitrage in which the exchange maximizes net profit at the time of order transaction, subject to \textit{zero} payoff loss in the future for all possible states of $\S$:
%, regardless of the realization of $S$ at the expiration date $T$\@:
\begin{align}
	\max \limits_{\vgamma, \vdelta} & \displaystyle \quad \vb^\top \vgamma - \va^\top \vdelta \notag\\
	\text{s.t.} & \displaystyle \quad \,\,\underbrace{\!\!\sum_{m} \gamma_m \max\{\phi_m(S - p_m), 0\}\!\!}_{\Psi_{\Gamma}}\,\, - \,\,\underbrace{\!\!\sum_{n} \delta_n \max\{\psi_n(S - q_n), 0\}\!\!}_{\Psi_{\Delta}}\,\,\leq 0, \quad \forall S \in [0, \infty) \notag
\end{align}
Here, the first term $\Psi_{\Gamma}:=\sum_{m} \gamma_m \max\{\phi_m(S - p_m), 0\}$ in the constraint calculates the total payoff of sold options as a function of $S$, which is the obligation or liability of the exchange at the time of option expiration.
The second term $\Psi_{\Delta} := \sum_{n} \delta_n \max\{\psi_n(S - q_n), 0\}$ computes the total payoff of bought options, by which the exchange has the right to exercise.
The constraint guarantees that the liability of the exchange does not exceed its payoff, regardless of the value of $S$ on the expiration date.
We denote options bought by the exchange as Portfolio $\Delta$ and options sold as Portfolio $\Gamma$, and the constraint enforces that Portfolio $\Delta$ \emph{(weakly) dominates} Portfolio $\Gamma$.

We further extend the above formulation.
Instead of restricting the exchange to zero loss at expiration, we allow it to take into account the potential (worst-case) deficit or gain in the future.
We denote this extra degree of freedom as a decision variable $L \in \R$, and have the following matching mechanism:
%, regardless of the value of underlying securities at expiration. We formulate the matching problem below.
%
\begin{align}\tag{M.1}
\label{mech:single_match}
\max \limits_{\vgamma, \vdelta, L} & \displaystyle \quad \vb^\top \vgamma - \va^\top \vdelta - L\\
\text{s.t.} & \displaystyle \quad \,\,\underbrace{\!\!\sum_{m} \gamma_m \max\{\phi_m(S - p_m), 0\}\!\!}_{\Psi_{\Gamma}}\,\, - \,\,\underbrace{\!\!\sum_{n} \delta_n \max\{\psi_n(S - q_n), 0\}\!\!}_{\Psi_{\Delta}}\,\,\leq L, \quad \forall S \in [0, \infty) \label{eq:single_constraint}
\end{align}
% explain opt meaning & runtime complexity

The constraint now guarantees that the difference between the liability and the payoff of the exchange will be bounded by $L$, regardless of the value of $S$ on the expiration date. 
Following the definition below, we say that the constraint enforces Portfolio $\Delta$ to \emph{(weakly) dominate} Portfolio $\Gamma$ with some constant offset $L$.
%Similarly, we denote options bought by the exchange as Portfolio $\Delta$ and options sold as Portfolio $\Gamma$, and describe their relationship below.
%The objective maximizes overall profit, taking into account the potential payoff at expiration.
%
\begin{definition} [Payoff dominance with an offset. Adapted from~\citeauthor{Merton1973}~\cite{Merton1973}]
	\label{def:opt_dominate}
	%(Adapted from~\citeauthor{rational_pricing}~\citeyear{rational_pricing}) 
	Portfolio $\Delta$ \emph{(weakly) dominates} Portfolio $\Gamma$ with an offset $L$, if the payoff of Portfolio $\Delta$ plus a constant $L$ is %no less than 
	greater than or equal to that of Portfolio $\Gamma$ for all possible states of the underlying variable at expiration. 
	Portfolio $\Gamma$ is said to be \emph{(weakly) dominated} by Portfolio $\Delta$ with an offset $L$.
\end{definition}
%
% TO-DO: simple example
Since this potential gain or loss is further incorporated in the objective at the time of matching, the mechanism \ref{mech:single_match}  guarantees no overall loss for each match, and enjoys an extra degree of freedom to trade.
We give two motivating examples based on real-market options data to illustrate the economic meaning and the usefulness of a flexible $L$.
Each of the matches shown below would not have been found if we restrict a zero payoff loss for the exchange.
\Ex{nonneg_L} showcases the scenario where $L$ allows the exchange to take a (worst-case) deficit at the time of option expiration, if it is preemptively covered by a surplus (i.e., revenue) at the time of order transaction.
% interpreting L
% tighten bid & ask spread, quote for any strike
%# ~~~~~~~~~~~~~~~~~~~~~~~~~~~~48 DIS options expired on 20190621~~~~~~~~~~~~~~~~~~~~~~~~~~~~
%# Sell 1.0 C(DIS, 110.0) at bid price 7.2
%# Buy 1.0 C(DIS, 150.0) at ask price 0.05
%# Buy 1.0 P(DIS, 110.0) at ask price 5.1
%# Sell 1.0 P(DIS, 150.0) at bid price 38.75
\begin{example} [A match with a positive $L$] %non-negative
	\label{ex:nonneg_L}
	%	Such match yields the exchange a gain now, subject to a loss at expiration no more than today's gain.
	We use \ref{mech:single_match} to consolidate options of Walt Disney Co. (\DIS) that are priced on January 23, 2019 and expire on June 21, 2019.
	We find the following match, where each order would not transact in its corresponding independent market. 
	The exchange can
	\begin{itemize}
		\setlength{\itemsep}{2pt}
		\item sell to the buy order on $C(\DIS, 110)$ at bid \$7.2,
		\item sell to the buy order on $P(\DIS, 150)$ at bid \$38.75,
		\item buy from the sell order on $C(\DIS, 150)$ at ask \$0.05,
		\item buy from the sell order on $P(\DIS, 110)$ at ask \$5.1.
	\end{itemize}
	Therefore, the exchange gets an immediate gain of \$40.8 ($7.2+38.75-5.1-0.05$).
	%Upon maturity of these options, the exchange will exercise the right on bought options, i.e., $C(\text{DIS}, 150)$ and $P(\text{DIS}, 110)$, and execute the obligation on sold options, i.e., $C(\text{DIS}, 110)$ and $P(\text{DIS}, 150)$.	
	Figure~\ref{subfig:lgtzero} plots the payoffs of bought and sold options as a function of DIS, showing that the exchange will have a net liability of $\$40$ (i.e., $L=40$), regardless of the DIS value at expiration.
	The exchange makes a net profit of \$0.80 from the match at no risk.
	\qed
\end{example}
\Ex{neg_L} below demonstrates a different scenario where the decision variable $L$ allows the exchange to have a temporary deficit (i.e., expense) at the time of order transaction, if it is guaranteed to earn it back later at the time of option expiration. 
%Therefore, the mechanism guarantees no loss (as $L$ is incorporated in the objective) for each match, and has an extra degree of freedom to match orders.
%~~~~~~~~~~~~~~~~~~~~~~~~~~~~102 AAPL options expired on 20200117~~~~~~~~~~~~~~~~~~~~~~~~~~~~
%Sell 1.0 C(AAPL, 160.0) at bid price 14.1
%Buy 1.0 C(AAPL, 80.0) at ask price 74.2
%Buy 1.0 P(AAPL, 160.0) at ask price 19.1
%Sell 1.0 P(AAPL, 80.0) at bid price 0.62
%The obj is 1.42: profit -78.58 now, subject to at most -80.0 loss in the future
\begin{example} [A match with a negative $L$]
	\label{ex:neg_L}
	We consolidate options of Apple Inc. (\AAPL) that are priced on January 23, 2019 and expire on January 17, 2020.
	We find the following match, where the exchange can
	\begin{itemize}
		\setlength{\itemsep}{2pt}
		\item sell to the buy order on $C(\AAPL, 160)$ at bid \$14.1,
		\item sell to the buy order on $P(\AAPL, 80)$ at bid \$0.62,
		\item buy from the sell order on $C(\AAPL, 80)$ at ask \$74.2,
		\item buy from the sell order on $P(\AAPL, 160)$ at ask \$19.1.
	\end{itemize}
	The match incurs an expense of \$78.58 now ($14.1+0.62-74.2-19.1$), and yields a guaranteed payoff of \$80 (i.e., $L=-80$) at expiration. 
	Figure~\ref{subfig:lstzero} depicts the respective payoffs of bought and sold options.
	From the match, we can infer an interest rate of 1.82\%, calculated by $78.58 e^{r \Delta t} = 80$.
	\qed
\end{example}
\begin{figure}[t]
	\centering
	\begin{subfigure}{0.47\columnwidth}	
		%\centering
		\includegraphics[width=0.9\columnwidth]{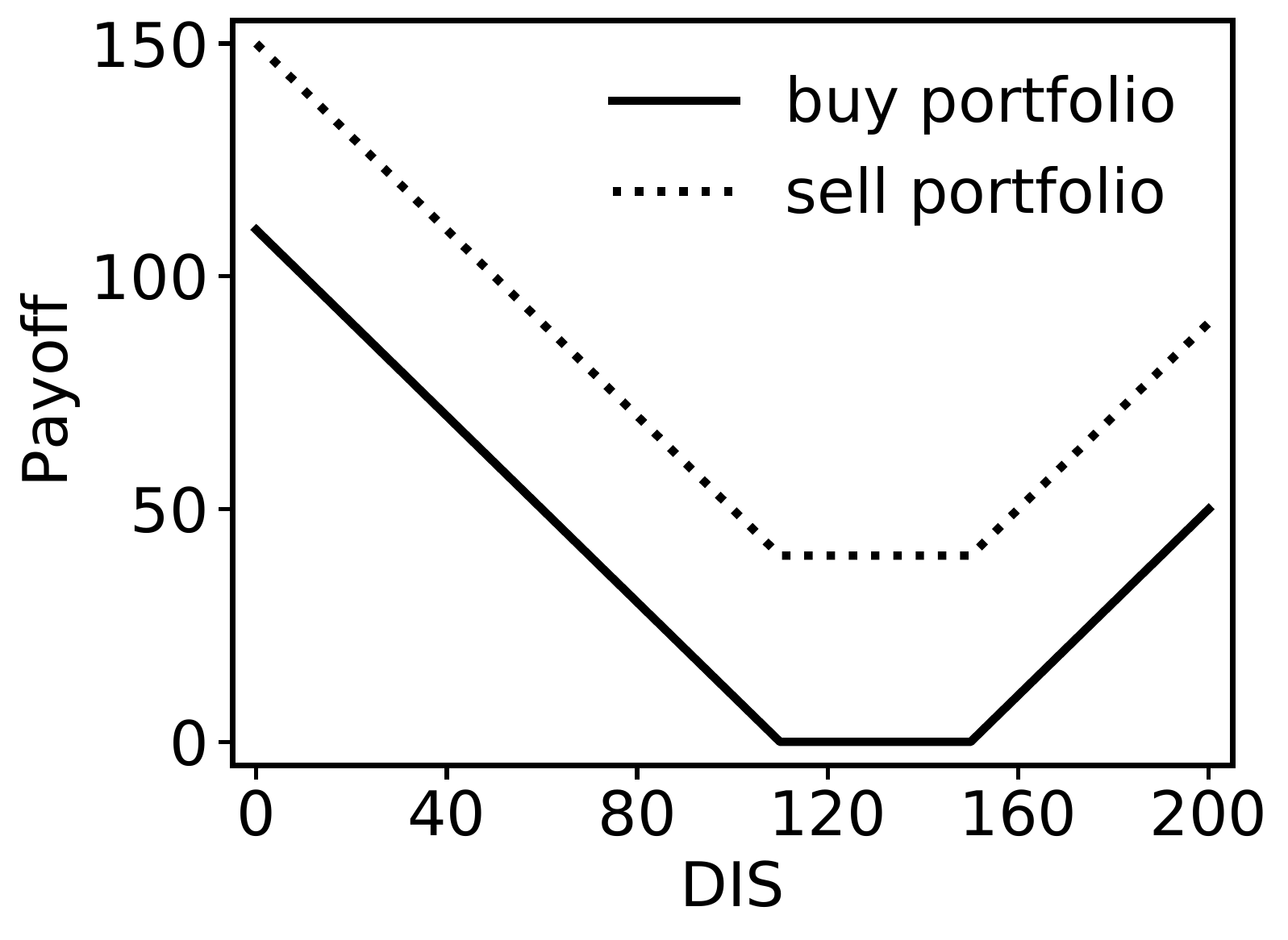}
		\caption{Payoffs of standard options bought and sold in \Ex{nonneg_L} as a function of DIS value.}
		\label{subfig:lgtzero}
	\end{subfigure}
	\hspace{0.04\columnwidth}
	\begin{subfigure}{0.47\columnwidth}	
		%\centering
		\includegraphics[width=0.9\columnwidth]{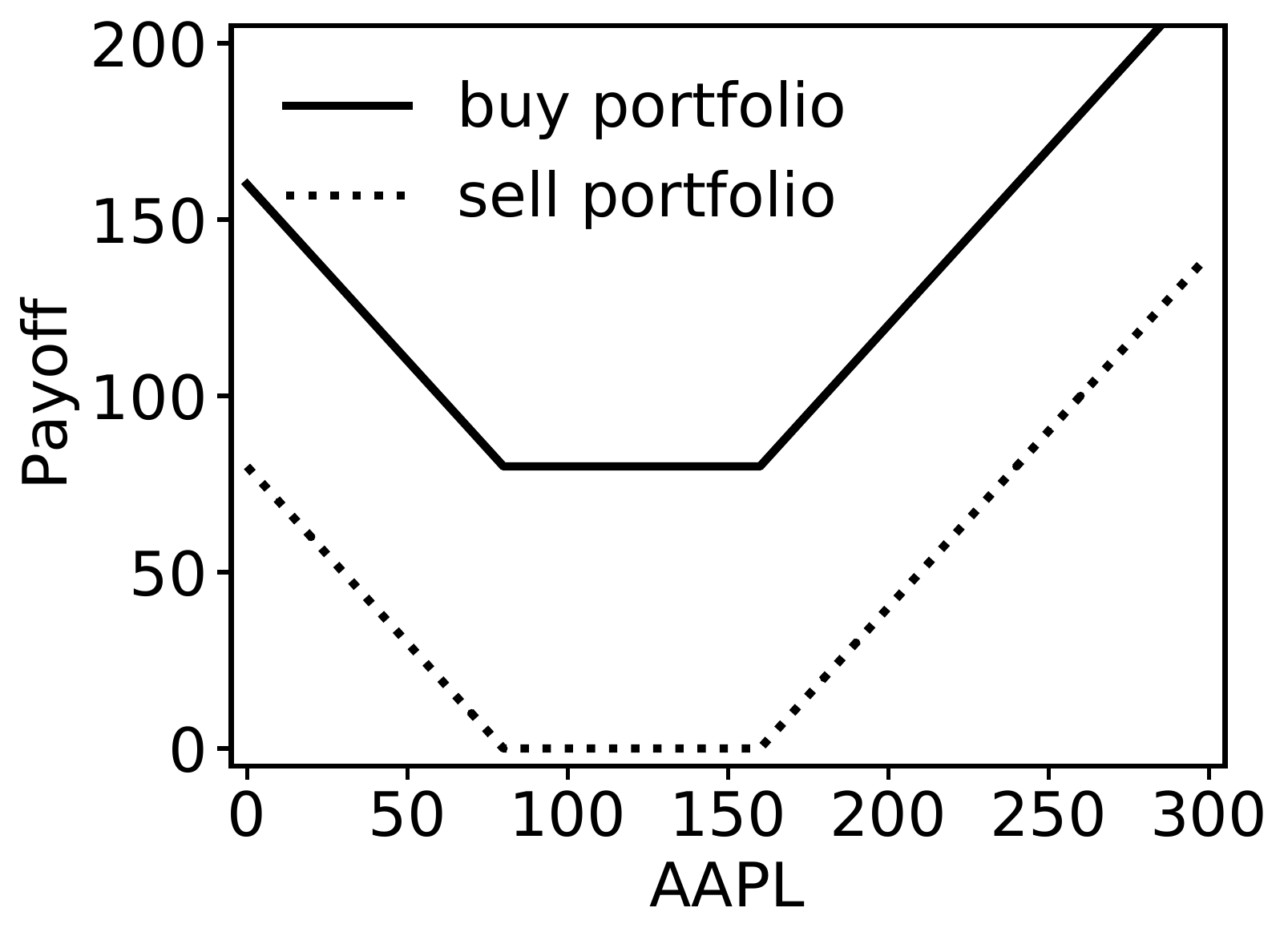}
		\caption{Payoffs of standard options bought and sold in \Ex{neg_L} as a function of AAPL value.}
		\label{subfig:lstzero}
	\end{subfigure}
	%	\begin{subfigure}{0.69\columnwidth}	
	%		\centering
	%		\includegraphics[width=0.9\columnwidth]{img/combo_example.pdf}
	%		\caption{Payoff of combinatorial options matched in \Ex{combo_options} as a function of AAPL \& MSFT.}
	%	\end{subfigure}
	\caption[Payoffs of the matched options as a function of the value of the underlying asset at expiration.]{Payoffs of the matched options as a function of the value of the underlying asset at expiration. Fig.~\ref{subfig:lgtzero} shows the case of $L>0$, and Fig.~\ref{subfig:lstzero} the case of $L<0$.}
	\label{fig:arb_example}
\end{figure}

We now analyze the complexity of running \Mech{single_match}.
The left-hand side of constraint \eqref{eq:single_constraint} is a linear combination of max functions, and thus is a piecewise linear function of $S$.
Therefore, it suffices to solve \ref{mech:single_match} by satisfying constraints defined by $S$ at each breakpoint.
In our case, breakpoints of the constraint \eqref{eq:single_constraint} %nothing but
are the defined strike values in the market, plus two endpoints: $\vp \cup \vq \cup \{0, \infty\}$.
Let $n_K$ denotes the number of distinct strike values in the market, which is bounded above by %$\order(n_{\text{orders}})$, where 
$n_{\text{orders}} = M+N$, the total number of orders in the market.
Therefore, \ref{mech:single_match} is a linear program that has $n_K+2$ payoff constraints, and requires time polynomial in the size of the problem instance to solve.
We defer the complete proof to Appendix A.1.
\begin{theorem}
	\label{thm:consolidate_standard_options}
	\Mech{single_match} matches financial options written on the same underlying asset and expiration date across all types and strike prices in time polynomial in the number of orders.
\end{theorem}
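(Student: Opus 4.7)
The plan is to reduce \ref{mech:single_match}, which has an infinite continuum of constraints indexed by $S \in [0,\infty)$, to a linear program with a polynomial number of linear constraints, and then invoke any polynomial-time LP solver.

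First, I would observe that for each fixed $(\vgamma, \vdelta)$, the left-hand side
\[
f(S) \;:=\; \Psi_\Gamma(S) - \Psi_\Delta(S) \;=\; \sum_m \gamma_m \max\{\phi_m(S-p_m),0\} \;-\; \sum_n \delta_n \max\{\psi_n(S-q_n),0\}
\]
is a piecewise linear function of $S$ whose breakpoints all lie in $\vp \cup \vq$, because each summand is piecewise linear with a single kink at its own strike and a sum of piecewise linear functions is piecewise linear. On every interval between consecutive breakpoints $f$ is affine, so its supremum on that interval is attained at an endpoint. Hence the constraint $f(S) \leq L$ for all $S \in [0,\infty)$ is determined by the behavior of $f$ at the finitely many breakpoints, the left endpoint $S=0$, and the asymptote $S \to \infty$.

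Next, I would handle the tail explicitly: for $S > \max(\vp \cup \vq)$, every put term vanishes and every call term equals $\gamma_m(S-p_m)$ or $\delta_n(S-q_n)$, so on the rightmost piece $f$ is affine with slope
\[
\sigma(\vgamma,\vdelta) \;:=\; \sum_{m:\,\phi_m=1}\gamma_m \;-\; \sum_{n:\,\psi_n=1}\delta_n.
\]
Therefore $\sup_{S\in[0,\infty)} f(S) < \infty$ iff $\sigma(\vgamma,\vdelta) \leq 0$, in which case the supremum is attained at some point in $\{0\} \cup \vp \cup \vq$. It follows that constraint~\eqref{eq:single_constraint} is equivalent to the finite system consisting of (i) $f(S) \leq L$ evaluated at each $S \in \{0\} \cup \vp \cup \vq$, and (ii) the slope inequality $\sigma(\vgamma,\vdelta) \leq 0$. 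Both families are linear in the decision variables $(\vgamma,\vdelta,L)$.

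Finally, letting $n_K := |\vp \cup \vq| \leq M+N$, the reduced problem is a linear program in $M+N+1$ variables with $n_K + 2$ payoff/slope constraints and box constraints $\vgamma \in [0,1]^M, \vdelta \in [0,1]^N$. Its size is polynomial in $n_{\text{orders}} = M+N$, so it is solvable in polynomial time by standard LP algorithms (e.g., interior-point methods); feasibility is immediate from $(\vgamma,\vdelta,L)=(\vzero,\vzero,0)$, and boundedness of the objective follows from the box constraints together with the slope inequality which forces $L \geq \sup_S f(S)$, finite. The one point requiring care, and the only genuine obstacle, is correctly translating the ``constraint at $S=\infty$'' into the slope condition $\sigma(\vgamma,\vdelta)\leq 0$; everything else is routine piecewise-linear bookkeeping and a citation of polynomial-time LP solvability.
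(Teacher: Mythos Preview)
Your proposal is correct and follows essentially the same approach as the paper: reduce the continuum of constraints to the finitely many breakpoints $\vp\cup\vq\cup\{0,\infty\}$, yielding an LP with $n_K+2$ payoff constraints and $M+N+1$ variables solvable in polynomial time. The only difference is cosmetic---you explicitly unpack the ``constraint at $S=\infty$'' as the linear slope condition $\sigma(\vgamma,\vdelta)\le 0$, whereas the paper simply lists $\infty$ among the evaluation points; both amount to the same $n_K+2$ linear inequalities.
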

\paragraph{Remarks.}
Several extensions can be directly applied to the mechanism:
\begin{enumerate}
    \setlength{\itemsep}{0pt}
    \item We can incorporate the time value of investments by multiplying $L$ by a (discount) rate in the objective of \ref{mech:single_match}.
    \item We note that matching solutions returned by \ref{mech:single_match} may involve fractional shares of stocks. Ideally, an exchange (e.g., cash-settled markets) allows it, and if an exchange (e.g., physical-settled markets) does not, we would need to normalize or round.
    \item Mechanism~\ref{mech:single_match} identifies a maximal bundle of orders that can be accepted without risk to the exchange, but does not specify what to do with the surplus if any. The surplus can be split arbitrarily among the involved traders and the exchange. We note that while the surplus is guaranteed to be nonnegative, it may have an uncontingent (cash) component and a state-contingent component.
\end{enumerate}
% First, we can incorporate the time value of investments by multiplying $L$ by a (discount) rate in the objective of \ref{mech:single_match}.
% Second, we note that matching solutions returned by \ref{mech:single_match} may involve fractional shares of stocks. 
% Ideally, an exchange (e.g., cash-settled markets) allows it, and if an exchange (e.g., physical-settled markets) does not, we would need to normalize or round. Third, mechanism~\ref{mech:single_match} identifies a maximal bundle of orders that can be accepted without risk to the exchange, but does not specify what to do with the surplus if any. The surplus can be split arbitrarily among the involved traders and the exchange. Note that while the surplus is guaranteed to be nonnegative, it may have an uncontingent (cash) component and a state-contingent component.
%\begin{enumerate}[(1)]
%	%\setlength{\itemsep}{0pt}
%	\item The time value of investments can be incorporated by multiplying $L$ by a (discount) rate in the objective of \Mech{single_match}.
%	\item We can adapt constraints on decision variables $\vgamma$ and $\vdelta$ to reflect different quantities specified in orders.
%	\item By restricting $L$ equal to 0, \Mech{single_match} matches orders by finding a common form of arbitrage where the exchange may profit at the time of order transaction subject to zero loss in the future.
%	%In experiments, we evaluate both cases and demonstrate the benefit of introducing the degree of freedom to choose $L$ as a decision variable.
%\end{enumerate}

%
\subsection{Quoting Prices for Standard Financial Options}
A standard exchange maintains the best quotes (i.e., the highest bid and lowest ask) for each independent options market.
This section extends \Mech{single_match} to quote the most competitive prices for any \emph{custom} option of a specified type and strike by considering all other options related to the same underlying security.
We describe the price quote procedure in an arbitrage-free options market\footnote{That is, no match will be returned by \Mech{single_match}.} in regard to a single underlying asset $S$ with an expiration date $T$, represented by $(\vphi, \vp, \vb, \vpsi, \vq, \va)$. 
We defer the detailed proof of correctness to Appendix A.2. %\App{proof_price_quote}.

\begin{enumerate}
	\setlength{\itemsep}{8pt}
	\item The best bid $b^*$ for an option $(\chi, S, K)$ is the maximum gain of selling a portfolio of options that is \emph{weakly dominated} by $(\chi, S, K)$ with some constant offset $L$.
	
	We derive $b^*$ by adding the option $(\chi, S, K)$ to the sell side of the market indexed $N+1$ (as the exchange buys from sell orders), initializing its price $a_{N+1}$ to 0, and solving for \ref{mech:single_match}. 
	The best bid $b^*$ is then the returned objective.
	
	\item The best ask $a^*$ for an option $(\chi, S, K)$ is the minimum cost of buying a portfolio of options that \emph{weakly dominates} $(\chi, S, K)$ with some constant offset $L$.
	
	We derive $a^*$ by adding the option $(\chi, S, K)$ to the buy side of the market indexed $M+1$, initializing its price $b_{M+1}$ to a large number (e.g., $10^6$), and solving for \ref{mech:single_match}. 
	The best ask $a^*$ is then $b_{M+1}$ minus the returned objective.
\end{enumerate}
In the case of matching orders with multiple units, it is necessary to consider all orders in the market.
For deciding the existence of a match or quoting (instantaneous) prices, however, we only need to consider a set of orders that have the most competitive prices.
We define a set with such orders as a frontier set $\F$.
\begin{definition} [A Frontier Set of Options Orders]
	\label{def:opt_frontier}
	An option order is in the \emph{frontier set} $\F$ if its bid price is greater than or equal to the maximum gain of selling a weakly dominated portfolio of options with some offset $L$, or if its ask price is less than or equal to the minimum cost of buying a weakly dominant portfolio of options for some offset $L$.
%	or ask cannot be improved by any linear combination of other options in the market.
%	That is, the bid price of an option is greater than or equal to the maximum gain of selling a weakly dominated portfolio of options with some offset $L$; the ask price of an option is less than or equal to the minimum cost of buying a weakly dominant portfolio of options for some offset $L$.
\end{definition}

\begin{corollary}
	\label{coro:frontier_set}
	\Mech{single_match} determines the existence of a match and returns the instantaneous price quote in time polynomial in $\card{\F}$.
\end{corollary}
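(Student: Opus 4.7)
The plan is to argue that \ref{mech:single_match}, restricted to orders in the frontier set $\F$, attains exactly the same optimal value as the full instance — both for deciding whether a match exists (sign of the objective) and for computing a price quote (objective of the LP augmented with the test option). Once that equivalence is established, applying Theorem~\ref{thm:consolidate_standard_options} to the restricted problem instantly yields a running time polynomial in $|\F|$.

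The core step is a \emph{substitution claim}: for any feasible $(\vgamma, \vdelta, L)$ of \ref{mech:single_match}, if some buy order $m \notin \F$ has $\gamma_m > 0$, then there exists another feasible point with $\gamma_m = 0$ and weakly greater objective. By Definition~\ref{def:opt_frontier}, $m \notin \F$ means $b_m$ falls strictly below the best-bid quote $b_m^*$ for the option $(\phi_m, p_m)$ synthesized from the rest of the market. The price-quote procedure of the previous subsection actually produces such a synthesis — a fractional allocation of other orders together with an offset $\tilde L_m$, whose sold-side payoff is weakly dominated by $(\phi_m, p_m)$ and whose per-unit net objective contribution is exactly $b_m^*$. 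Scaling that synthesis by $\gamma_m$, adding it into $(\vgamma', \vdelta', L')$, and simultaneously incrementing $L$ by $\gamma_m \tilde L_m$ to absorb the dominance slack replaces order $m$'s contribution and changes the objective by $\gamma_m(b_m^* - b_m) \geq 0$ while preserving constraint~\eqref{eq:single_constraint}. The symmetric argument eliminates any $\delta_n > 0$ with $n \notin \F$ using a weakly dominant synthesis at ask $a_n^* \leq a_n$. Iterating these substitutions — each step strictly shrinks the set of active non-frontier orders, so the process terminates — shows that the optima of \ref{mech:single_match} over the full order set and over $\F$ coincide. The match-existence test and the two price-quote LPs (both instances of \ref{mech:single_match} with at most one extra order appended to one side) are therefore decided by $\F$, and Theorem~\ref{thm:consolidate_standard_options} runs them in time polynomial in $|\F|$.

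The main obstacle I anticipate is the bookkeeping of the offset variable $L$ during substitution: since the synthesized dominating portfolio typically pays strictly less than the option it replaces up to an additive constant, one must raise $L$ to preserve feasibility, and this raise costs the objective. The crucial identification is that $b_m^*$ (and analogously $a_n^*$), being the optimum of the price-quote LP, already nets out this $L$ cost by construction, which is exactly what makes each substitution non-loss-making. A secondary subtlety is that the dominating combination itself might involve other non-frontier orders; recursive application of the substitution handles this, with termination guaranteed because the number of non-frontier orders with positive allocation strictly decreases at every step.
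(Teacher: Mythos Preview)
Your substitution argument---replacing a non-frontier order by a weakly dominated (or dominant) portfolio with strictly better per-unit objective contribution, absorbing the dominance slack into $L$---is exactly the mechanism the paper uses, and your accounting of the offset is correct. The difference is that the paper applies the substitution once, directly at an optimum of the LP: if the optimal $(\vgamma,\vdelta,L)$ activates some $o\notin\F$, a single replacement strictly improves the objective (since $o\notin\F$ forces $b_o<b_o^*$ or $a_o>a_o^*$ strictly, so the gain $\gamma_o(b_o^*-b_o)$ or $\delta_o(a_o-a_o^*)$ is strictly positive), contradicting optimality. No iteration is needed.

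Your iterative framing has a real termination gap. You assert that ``the number of non-frontier orders with positive allocation strictly decreases at every step,'' but the replacement portfolio for $m$ is synthesized from \emph{all} remaining orders in the market, not just from $\F$; eliminating $m$ may activate previously dormant non-frontier orders, so the count need not drop (it can even rise). Strict increase of the objective does not rescue termination either, because the weight $\gamma_m$ attached to successive substitutions can shrink geometrically, so the increments need not be bounded away from zero. The clean fix is precisely the paper's move: drop the iteration, apply your substitution once at an optimal solution (which exists, the LP being feasible and bounded), and read off the contradiction.
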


The proof of Corollary~\ref{coro:frontier_set} is deferred to Appendix A.3, %\App{proof_corollary}, 
which shows that in order to determine the existence of a match or quote the most competitive prices (i.e., the highest bid and the lowest ask) for any target option $(\chi, S, K)$, it suffices to consider options orders in $\F$ and run \Mech{single_match}.
The runtime complexity follows immediately from \Thm{consolidate_standard_options}.

\section{Combinatorial Financial Options}
\label{sec:combo_options}
This section introduces \emph{combinatorial financial options} and designs a market to trade such options. 
Combinatorial options extend standard options to more general derivative contracts that can be written on any linear combination of $U$ underlying assets.
% We extend standard financial options written on a single underlying asset or index to \emph{combinatorial financial options}, which give contract holders the right to buy or sell any specified linear combination of underlying securities on the expiration date by paying or getting the strike price.
We formally define a combinatorial financial option and its specifications.
\vspace{1ex}
\begin{definition} [Combinatorial Financial Options]
	Consider a set of $~U$ underlying assets.
	Combinatorial financial options are derivative contracts that specify the right to buy or sell a linear combination of the $U$ underlying assets at a strike price on an expiration date.
	Each contract specifies a call or put type $\chi \in \{1, -1\}$, a weight vector $\vomega \in \R^{U}$, a strike price $K \geq 0$, and an expiration date $T$.
	It has a payoff of $\max\{\chi(\vomega^\top \S-K), 0\}$, where $\S \in \R^U_{\geq 0}$ is a vector of the underlying assets' values at $T$.
	% for a set of $U$ underlying assets
	% 	is a contract written on a set of $U$ underlying assets that specifies an option type $\chi \in \{1, -1\}$ (for call or put),  and a premium $\pi \in \R_{+}$.
	% 	An option buyer pays the premium $\pi$ to purchase a call or a put combinatorial option that respectively gives the right to buy or sell the linear combination of underlying assets specified by $\vomega$ at the strike price $K$ on date $T$, and gets a payoff of
\end{definition}
\vspace{2ex}
%
% example & motivation
Consider a combinatorial option $C(\MSFT-\AAPL, 0)$ that has weight $1$ for \MSFT, weight $-1$ for \AAPL, and a strike price of zero.
An investor who buys the option bets on the event that Microsoft outperforms Apple Inc., and will exercise it if $S_\MSFT > S_\AAPL$. %the share price of Microsft exceeds the share price of Apple Inc., thus betting on Microsoft to outperform Apple Inc., and will exercise this option if the share price of Microsft exceeds the share price of Apple Inc. (i.e.,
Thus, unlike standard options that will pay off due to price changes of a single security, combinatorial options bet on relative movements between assets or groups of assets, thus enabling the expression of future correlations among different underlying assets. %(e.g., options written on stocks that belong to different sectors).

We note several distinctions and interpretations in regard to the definition of combinatorial financial options:
\begin{enumerate}
    \setlength{\itemsep}{6pt}
    
    \item Distinction between a combinatorial option and a combination of standard options. 
    
    Consider the above combinatorial option $C(\MSFT-\AAPL, 0)$ and a combination of two standard options $C(\MSFT, K)$ and $P(\AAPL, K)$. One would exercise the combinatorial option and possess the portfolio, i.e., long a share of \MSFT and short sell a share of \AAPL, as long as $S_\MSFT \geq S_\AAPL$, whereas in the combination of standard options, one would exercise both and possess the same portfolio, if $S_\MSFT\geq K$ \emph{and} $S_\AAPL \leq K$.%
    \footnote{In terms of possessing the portfolio \MSFT-\AAPL, $C(\MSFT-\AAPL, 0)$ captures all combination of two standard options of the form, $C(\MSFT, K)$ and $P(\AAPL, K)$ for all positive $K$.}
    Thus, the exercise of a combinatorial option does not imply a simultaneous exercise of all standard options in the combination, and vice versa, the simultaneous buy and sell of several standard financial options with different underlying securities and strike prices (e.g., multi-leg options orders) cannot replicate the payoff of a combinatorial option.

    \item Distinction between a call and a put combinatorial option.
    
    We note that the distinction between a call and a put for combinatorial options depends on the strike price and coefficients that one specifies in a contract.
    For instance, in the above example, $C(\MSFT-\AAPL, 0)$ is identical to $P(\AAPL-\MSFT, 0)$, as they have the same payoff function $\max\{S_\MSFT-S_\AAPL, 0\}$.
    Despite the different interpretations and expressions, we follow the convention of standard financial options, and have the strike price always be non-negative.
\end{enumerate}

% We also note the distinction between a combinatorial option and a combination of standard options. 
% Consider the above combinatorial option $C(\MSFT-\AAPL, 0)$ and a combination of two standard options $C(\MSFT, K)$ and $P(\AAPL, K)$.
% %Exercising a combinatorial option is different from simultaneously trading a portfolio of standard options. Consider a combinatorial option with payoff max{A+B-K, 0} and a portfolio of options with payoffs max{A-K1, 0} and max{B-K2, 0}. 
% One would exercise the combinatorial option and possess the portfolio, i.e., long a share of \MSFT and short sell a share of \AAPL, as long as $S_\MSFT \geq S_\AAPL$, whereas in the combination of standard options, one would exercise both and possess the same portfolio, if $S_\MSFT\geq K$ \emph{and} $S_\AAPL \leq K$.%
% \footnote{In terms of possessing the portfolio \MSFT-\AAPL, $C(\MSFT-\AAPL, 0)$ captures all combination of two standard options of the form, $C(\MSFT, K)$ and $P(\AAPL, K)$ for all positive $K$.}
% Thus, the exercise of a combinatorial option does not imply a simultaneous exercise of all standard options in the combination, and vice versa, the simultaneous buy and sell of several standard financial options with different underlying securities and strike prices (e.g., multi-leg options orders) cannot replicate the payoff of a combinatorial option.
%Later we will show that the computational hardness of clearing a combinatorial options market comes from this combinatorial nature.

\subsection{Matching Orders on Combinatorial Financial Options}
% Our motivation behind designing a combinatorial options market is to facilitate the elicitation and recovery of future correlations among different assets.
The increased expressiveness in combinatorial options brings new challenges in market design: 
%consolidating options related to the same (combination of) underlying assets may no longer remedy the thin market problem, as it becomes very unlikely to aggregate options of the same assets and weights.
only matching buy and sell orders on options related to the same assets and weights may yield few or no trades, despite plenty of profitable trades among options written on different portfolios.
% matching combinatorial options orders, exacerbating the thin market problem.
% This increased expressiveness of specifying any combination brings new challenges to match combinatorial options orders, exacerbating the thin market problem.
% Aggregating options in regard to the same (portfolio of) underlying assets may no longer be amendable, as matching orders on the exact same underlying portfolio in a combinatorial options market becomes very unlikely.
%Before formally defining the matching problem, 
We start by giving the following motivating examples to illustrate such scenarios.
\begin{example}[Matching combinatorial option orders]
	\label{ex:combo_options}
	Consider a combinatorial options market with four orders
	\begin{itemize}
		\setlength{\itemsep}{2pt}
		\item $o_1$: buy one $C(1\AAPL+2\MSFT, 300)$ at bid \$110;
		\item $o_2$: buy one $C(1\AAPL+1\MSFT, 300)$ at bid \$70;
		\item $o_3$: sell one $C(1\AAPL+3\MSFT, 300)$ at ask \$160;
		\item $o_4$: sell one $C(1\AAPL, 250)$ at ask \$5.
	\end{itemize}
	The exchange returns no match if it only considers options related to the same combination of assets.
	However, a profitable match does exist.
	The exchange can sell to $o_1$ and $o_2$ and simultaneously buy from $o_3$ and $o_4$ to get an immediate gain of \$15 (110+70-160-5).
	Figure~\ref{fig:combo_match_example} plots the overall payoff (which is always non-negative), as a function of $S_\AAPL$ and $S_\MSFT$ on the expiration date:
	{
		\begin{align}
		\Psi := &\max\{S_\AAPL+3S_\MSFT-300, 0\} + \max\{S_\AAPL-250, 0\} - \notag\\
		&\max\{S_\AAPL+2S_\MSFT-300, 0\} - \max\{S_\AAPL+S_\MSFT-300, 0\},\notag
		\end{align}
	}
	Therefore, the exercises of options cannot subtract from the \$15 immediate gain, but could add to it, depending on the future prices of the two stocks. %is subject to no loss, but a probability of some positive income in the future.
	% 	By completing the match, the exchange has a payoff $\Psi$ of
	% 	\Fig{arb_example}c plots $\Psi$ as a function of $S_\text{AAPL}$ and $S_\text{MSFT}$, which yields a minimum of 0 and a maximum of 300 (when $S_\text{AAPL} =0$ and $S_\text{MSFT} \geq 300$).
	\qed
\end{example}

\begin{figure}[h]
	\centering
	\includegraphics[width=0.45\columnwidth]{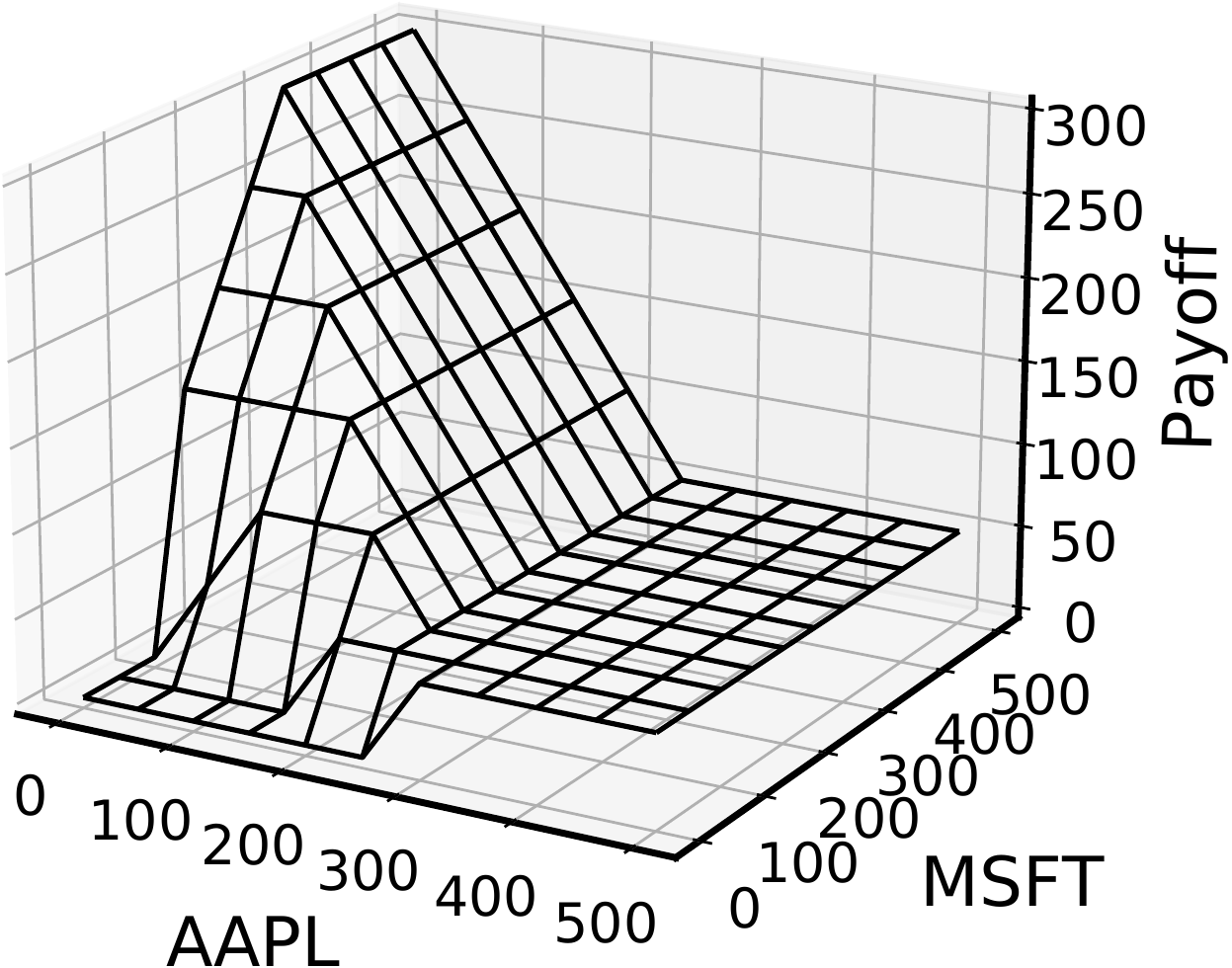}
	\caption[Payoff of combinatorial options matched in \Ex{combo_options} as a function of $S_\AAPL$ and $S_\MSFT$.]{Payoff of combinatorial options matched in \Ex{combo_options} as a function of $S_\AAPL$ and $S_\MSFT$. The example demonstrates the case of $L=0$.}
	\label{fig:combo_match_example}
\end{figure}

In the above example, the exchange can consider matching each individual buy order sequentially to some combination of sell orders.
For example, the exchange can first sell to buy order $o_1$ and buy from sell orders $\frac{2}{3}o_3$ and $\frac{1}{3}o_4$, and then sell to buy order $o_2$ and buy from sell orders $\frac{1}{3}o_3$ and $\frac{2}{3}o_4$.
Both are profitable trades subject to no loss, leading to the same ultimate match as in Example~\ref{ex:combo_options}.
The next example shows that such sequential matching of each individual buy order to a combination of sell orders may fail to find valid trades.
\begin{example}[Matching combinatorial option orders in batch]
	\label{ex:combo_options_2}
	Consider the following four combinatorial options orders
	\begin{itemize}
		\setlength{\itemsep}{2pt}
		\item $o_1$: buy one $C(\text{A}+\text{B}, 10)$ at bid \$6;
		\item $o_2$: buy one $C(\text{B}+\text{C}, 7)$ at bid \$6;
		\item $o_3$: sell one $C(\text{A}+\text{B}+\text{C}, 7)$ at ask \$10;
		\item $o_4$: sell one $C(\text{B}, 3)$ at ask \$2.
	\end{itemize}
	No match will be found if we consider each buy order individually: covering a sold combinatorial option in $o_1$ or $o_2$ requires buying the same fraction of $o_3$, which is at a higher price and will incur a net loss.
	However, a valid match does exist by selling to $o_1$ and $o_2$ and buying from $o_3$ and $o_4$.
	It costs the exchange \$0, and can yield a positive payoff in the future. 
	The exchange has
	{
		\begin{align*}
		\max\{S_A+S_B-10, 0\} + \max\{S_B+S_C-7, 0\}
		\leq \max\{S_A+S_B+S_C-7, 0\} + \max\{S_B-3, 0\},
		\end{align*}
	}
	meaning the liability will always be no larger than the payoff of bought options, for all non-negative $S_A, S_B, S_C$.%
	\footnote{To see this, we can add $\max\{S_A-7, 0\}$ on both sides of the inequality, and have $\max\{S_A-7, 0\} + \max\{S_B+S_C-7, 0\} \leq \max\{S_A+S_B+S_C-7, 0\}$ and $\max\{S_A+S_B-10, 0\} \leq \max\{S_A-7, 0\} + \max\{S_B-3, 0\}$ by Jensen's inequality.}
	\qed
\end{example}

We now introduce the formal setting of a combinatorial financial options market.
A combinatorial options market is a two-sided market with a set of buy limit orders, indexed by $m \in \{1, 2, ..., M\}$, and a set of sell limit orders, indexed by $n \in \{1, 2, ..., N\}$. 
% Each buy order $m$ is specified by (1)~a binary indicator $\phi_m \in \{1, -1\}$ that denotes the option type as a call or a put, (2)~a weight vector $\valpham = (\alpha^{(1)}_m,..., \alpha^{(i)}_m,..., \alpha^{(U)}_m)^\top \in \R^{U}$ that specifies the share of each asset, (3)~a strike price $p_m \in [0, \infty)$ for the underlying portfolio, (4)~a bid price $b_m \in \R_{+}$.
%\begin{enumerate}[(1)]
%	\item a binary indicator $\phi_m \in \{-1, 1\}$ that denotes its type as a put or call option;
%	\item a coefficient vector $\valpham = (\alpha^{(1)}_m,..., \alpha^{(i)}_m,..., \alpha^{(U)}_m)^\top \in \R^{U}$ that specifies the share of each underlying security;
%	\item a strike price $p_m \in \R_{+}$;
%	\item a bid price $b_m \in \R_{+}$.
%\end{enumerate}
% Thus, the payoff of a combinatorial option specified in order $m$ is $\max\{\phi_m(\valpham^\top \S - p_m), 0\}$.
% Similarly, we specify each ask contract $n$ with a binary indicator $\psi_n$, a coefficient vector $\vbetan$, a strike price $q_n$, and an ask price $a_n$.
Buy orders are represented by a type vector $\vphi \in \{1, -1\}^M$ with each entry specifying a put or a call combinatorial option, a weight matrix $\valpha \in \R^{U \times M}$ specifying the linear combinations, a strike vector $\vp \in \R_{\geq 0}^M$, and a bid-price vector $\vb \in \R_{+}^M$.
Sell orders are defined by a separate type vector $\vpsi \in \{1, -1\}^N$, a weight matrix $\vbeta \in \R^{U \times N}$, a strike vector $\vq \in \R_{\geq 0}^N$, and an ask-price vector $\va \in \R_{+}^N$.
%We design a mechanism to match buy orders to sell orders in the combinatorial options market.
%by buying from outstanding asks and selling to outstanding bids at their quoted prices.
Similar to a standard options market, the exchange decides the fraction $\vgamma \in [0, 1]^M$ to sell to buy orders and the fraction $\vdelta \in [0, 1]^{N}$ to buy from sell orders to maximize net profit.
We generalize the proposed mechanism \ref{mech:single_match} for standard options to facilitate trading combinatorial options: % written on different combinations of underlying assets.
%, while guarantee a certain bounded loss $L \in \R$ in the future, regardless of the value of underlying securities at expiration.
%We formulate the matching problem below.
%
\begin{align}\tag{M.2}
\label{mech:combo_match}
\max \limits_{\vgamma, \vdelta, L} & \displaystyle \quad \vb^\top \vgamma - \va^\top \vdelta - L\\
\text{s.t.} & \displaystyle \quad \displaystyle \sum_{m} \gamma_m \max\{\phi_m(\valpham^\top \S - p_m), 0\} -   \sum_{n} \delta_n \max\{\psi_n(\vbetan^\top \S - q_n), 0\} \leq L \quad \forall \S \in \R_{\geq 0}^U \label{eq:constraint}
\end{align}

However, unlike \ref{mech:single_match}, it is no longer feasible to solve the optimization problem \ref{mech:combo_match} by enumerating the constraint at each breakpoint, which requires iterating over every combination of underlying asset values.
The number of constraints can grow exponentially as $\order(2^{M+N})$ or $\order((M+N)^U)$ by Sauer's Lemma.\footnote{In fact, we can write \ref{mech:combo_match} as an exponential-sized linear program.}

%In a typical market with more orders than underlying securities, the set of constraints in \ref{eq:combo_match} is of size $\order((M+N)^U)$. % (i.e., $M+N \gg U$)
We analyze the complexity of finding the optimal match in a combinatorial options market, i.e., solving for \ref{mech:combo_match}.
We first show in the following theorem that given a market instance, it is NP-complete to decide if a certain matching assignment, $\vgamma$ and $\vdelta$, violates the constraint \eqref{eq:constraint} for a fixed $L$. 
%We defer the detailed proof to the Appendix~\ref{app:proof_thm2}, which shows a reduction from the Vertex Cover problem.
%and even assuming that each combinatorial option can be written on at most two underlying assets.
%
\begin{theorem}
	\label{thm:NP-complete}
	Consider all combinatorial options in the market $(\vphi, \valpha, \vp, \vpsi, \vbeta, \vq)$. 
	For any fixed $L$, it is NP-complete to decide 
	\begin{itemize}
		\setlength{\itemsep}{5pt}
		\item Yes: $\vgamma = \vdelta = \vone$ violates the constraint in \ref{mech:combo_match} for some $\S$,
		\item No: $\vgamma = \vdelta = \vone$ satisfies the constraint in \ref{mech:combo_match} for all $\S$,
	\end{itemize}
	even assuming that each combinatorial option is written on at most two underlying assets.
\end{theorem}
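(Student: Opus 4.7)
I would establish NP-completeness in two parts: membership in NP via a polynomial-size certificate indexing which $\max$-terms are active at a putative violator, and NP-hardness via a reduction from \textsc{Max-Cut} that respects the restriction that each option depends on at most two underlying assets.

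\emph{Membership in NP.} Evaluating the left-hand side of \eqref{eq:constraint} at $\vgamma = \vdelta = \vone$ yields a piecewise-linear function $f(\S)$, whose linear pieces are determined by the signs of the affine arguments $\phi_m(\valpham^\top \S - p_m)$ and $\psi_n(\vbetan^\top \S - q_n)$. The NP-certificate is a pair of sign patterns $(\sigma, \tau) \in \{0,1\}^M \times \{0,1\}^N$ declaring which of these arguments are non-negative at a violating $\S$. Given $(\sigma, \tau)$, the inequality $f(\S) > L$ collapses to a single affine inequality on the polyhedron defined by $\S \geq \vzero$ together with the sign constraints induced by $(\sigma, \tau)$; feasibility of this LP is decidable in polynomial time, and any actual violator $\S$ certifies at least one consistent $(\sigma, \tau)$.

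\emph{NP-hardness from \textsc{Max-Cut}.} Given a graph $G = (V,E)$ with $|V| = n$ and target $k$, construct a market with $U = n$ assets (one per vertex). For each $(u,v) \in E$, include two sell orders with $\phi = 1$, weights $\ve_u - \ve_v$ and $\ve_v - \ve_u$, and strike $0$; their payoffs sum to $|S_u - S_v|$. For each $v \in V$, include $n+1$ buy orders with $\psi = 1$, weight $\ve_v$, and strike $1$, contributing payoff $(n+1)\max\{S_v - 1, 0\}$. Prices do not enter \eqref{eq:constraint} and can be set arbitrarily. Every option involves at most two assets, and the construction is polynomial. Set $L = k - 1$. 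The key lemma I need is $\max_{\S \geq \vzero} f(\S) = \mathrm{MaxCut}(G)$ for $f(\S) = \sum_{(u,v) \in E} |S_u - S_v| - (n+1)\sum_v \max\{S_v - 1, 0\}$. I would prove it in two steps. First, any maximizer lies in $[0,1]^n$: let $v^\star \in \argmax_v S^\star_v$ with $S^\star_{v^\star} > 1$; the left derivative of $f$ in the coordinate $S_{v^\star}$ at $S^\star_{v^\star}$ equals (neighbors strictly below) $-$ (tied neighbors) $- (n+1) \leq (n-1) - 0 - (n+1) = -2 < 0$, so shrinking $S^\star_{v^\star}$ strictly increases $f$, a contradiction. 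Second, on $[0,1]^n$ the penalty vanishes and $f(\S) = \sum_{(u,v)} |S_u - S_v|$ is convex, so its maximum on the hypercube is attained at a $\{0,1\}$-vertex, where it equals the cut size of the induced bipartition. Since $\mathrm{MaxCut}(G)$ is integer-valued, the existence of $\S \geq \vzero$ with $f(\S) > k-1$ is equivalent to $\mathrm{MaxCut}(G) \geq k$, so \textsc{Max-Cut} reduces to the stated decision problem.

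\emph{Main obstacle.} The delicate step is the ``pushing to the cube'' argument: the penalty coefficient must strictly dominate the worst-case degree-minus-tied-count so that no optimum escapes $[0,1]^n$, and the one-sided slope calculation must correctly treat the non-differentiability of $|\cdot|$ at neighbors tied in value with $S^\star_{v^\star}$. Choosing $n+1$ copies of the penalty handles the first concern uniformly, and analyzing the slope separately from the strictly-below and tied neighbors handles the second. The remaining ingredients --- the sign-pattern LP for NP membership, the integer-valued-maximum reduction step, and polynomiality of the construction --- are routine once the lemma is in hand.
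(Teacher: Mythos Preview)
Your proposal is essentially correct and takes a genuinely different route from the paper. The paper reduces from \textsc{Vertex Cover}: each vertex $i$ gets four single-asset options (one buy, three sell) with two large constants $K_1\ll K_2$ that force $S_i\in\{0,1\}$ at any maximizer, each edge $(i,j)$ gets a two-asset buy/sell pair whose difference equals $1$ exactly when the edge is covered, and one extra constant-payoff sell option $g^\star=\max\{|E|-k-L-0.5,0\}$ absorbs both the target $k$ and the fixed $L$. Your \textsc{Max-Cut} reduction is cleaner: once the penalty pushes maximizers into $[0,1]^n$, convexity of $\sum_{(u,v)}|S_u-S_v|$ on the cube gives the $\{0,1\}$ restriction for free, avoiding the two-scale gadget. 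Your NP-membership via sign patterns plus an LP feasibility check is also more explicit than the paper's bare ``$\S$ is the certificate,'' which silently relies on a polynomial-bit violator existing.

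Two points to repair. First, your buy/sell labels are inverted: in constraint~\eqref{eq:constraint} the buy side (indexed $m$, type $\vphi$) contributes with a $+$ and the sell side (indexed $n$, type $\vpsi$) with a $-$, so the edge options that you want to contribute $+|S_u-S_v|$ must be \emph{buy} orders and the vertex penalties \emph{sell} orders. Second, and more substantively, the theorem fixes $L$ in advance, but you set $L=k-1$ as a function of the instance; as written you only prove hardness when $L$ is part of the input. The fix is exactly the paper's $g^\star$ trick: add a single zero-weight put with strike $|L-(k-1)|$ on the buy side if $L\geq k-1$ and on the sell side otherwise, shifting the left-hand side by the constant $L-(k-1)$ so that violation for the given fixed $L$ becomes equivalent to $\mathrm{MaxCut}(G)\geq k$. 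A minor third point: your contradiction argument presupposes that a maximizer of $f$ over $\R_{\geq 0}^n$ exists. Your slope estimate actually yields the stronger inequality $f(S)\leq f(\min(S,\vone))$ directly (use $|S_u-S_v|\leq|S'_u-S'_v|+\max\{S_u-1,0\}+\max\{S_v-1,0\}$ and $\deg(v)\leq n-1<n+1$), which both establishes boundedness and lets you bypass the derivative discussion entirely.
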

\begin{proof}
	The decision problem is in NP.
	Given a certificate which is a value vector $\S \in \R^U_{+}$, we plug $\S$ into the constraint of \ref{mech:combo_match} to compute the payoff and check whether it is less than $L$.
	This takes time $\order(U(M+N))$.
	
	For the NP-hardness, we prove by reducing from the Vertex Cover problem --- given an undirected graph $G=(V,E)$ and an integer $k$, decide if there is a subset of vertices $V' \subseteq V$ of size $k$ such that each edge has at least one vertex in $V'$.	
		%\emph{The Reduction.} 
		Given a Vertex Cover instance $(G, k)$, we construct an instance of the combinatorial options matching problem.
		Let the set of underlying assets correspond to vertices in $G$, i.e., $U=|V|$.
		For each vertex indexed $i$, we associate four options with it, one on the buy side and three on the sell side. They have payoff functions as follows:
		\begin{align*}
		f_i &=\max\{2K_1S_i-K_1, 0\}, &g^{(1)}_i &=\max\{K_1S_i, 0\},\\
		g^{(2)}_i &=\max\{K_2S_i-K_2, 0\}, &g^{(3)}_i &=\max\{S_i, 0\},
		\end{align*}
		where we choose $K_1$ and $K_2$ for some large numbers with $K_2 \gg K_1$.
		For example, we choose $K_1=10|E|$ and $K_2=100|E|$.
		For each edge $e = (i, j)$, we define two options that involve its two end-points $i$ and $j$, one on the buy side and one on the sell side.
		They have payoff functions:
		\begin{align*}
		f_e &= \max\{S_i+S_j, 0\}, &g_e = \max\{S_i+S_j-1, 0\}.
		\end{align*}
		Finally, we include one sell order on an option with payoff $g^\star = \max\{|E|-k-L-0.5,0\}$. %
		%\footnote{Here, we assign $g^\star$ a zero weight vector for simplicity.}
		Since $L$ is fixed in advance, we assume $|E|-k-L-0.5>0$ without loss of generality.
		Thus, we have $M=|V|+|E|$ buy orders and $N=3|V|+|E|+1$ sell orders.
		The construction takes time polynomial in the size of the Vertex Cover instance.
		
		\emph{Suppose the Vertex Cover instance is a Yes instance.}
		We assume that $\{v_1, v_2, ..., v_k\}$ is a vertex cover.
		We show that assigning $S_1,S_2,...,S_k$ to 1 for the selected underlying assets (i.e., vertices) and 0 for the rest unselected ones gives an $\S$ that violates the constraint \eqref{eq:constraint}.
		The left-hand side of the constraint \eqref{eq:constraint} is
		\begin{align}
		\label{eq:LHS}
		z := & \,\,\underbrace{\!\!\sum_{i \in V}\left(f_i-g^{(1)}_i-g^{(2)}_i-g^{(3)}_i\right)\!\!}_{z_v}\,\,+\,\,\underbrace{\!\!\sum_{e \in E} (f_e-g_e)\!\!}_{z_e}\,\,-g^\star.
		\end{align}
		For $S_i\in\{0,1\}$, it is easy to see that $f_i-g_i^{(1)}-g_i^{(2)}=0$.
		Thus, we have $z_v = -\sum_{i \in V}g^{(3)} = -k$ by our assignment.
		Since at least one of $S_i,S_j$ is $1$ for any edge $(i,j)\in E$, we have $f_e - g_e = 1$ and $z_e = \card{E}$.
		Therefore, we have 
		\begin{align*}
		z = & -k + |E| - (|E| - k - L - 0.5) = L + 0.5 > L.
		\end{align*}
		
		\emph{Suppose the Vertex Cover instance is a No instance.}
		We aim to show that for the given $\vgamma$ and $\vdelta$, there does not exist an $\S$ that violates the constraint.
		We prove by maximizing $z$ and demonstrating $z \leq L$.	
		We start by proving the following claim.
		\begin{claim}
			\label{thm:claim1}
			For an optimal $z$, we have $S_i \in \{0,1\}$.
		\end{claim}
		We prove by contradiction. First assume $S_j > 1$ for some $j$.
		We rewrite Eq.~\eqref{eq:LHS} and have
		\begin{align*}
		z := & \,\,\underbrace{\!\!f_j-g^{(1)}_j-g^{(2)}_j-g^{(3)}_j\!\!}_{z_j}\,\, + \,\,\underbrace{\!\!\sum_{i\in V\setminus j}\left(f_i-g^{(1)}_i-g^{(2)}_i-g^{(3)}_i\right)\!\!}_{z_i}\,\, + \,\,\underbrace{\!\!\sum_{e \in E} (f_e-g_e)\!\!}_{z_e}\,\,-g^\star.
		\end{align*}
		We first analyze $z_j$ and have
		\begin{align*}
		z_j = & \max\{2K_1 S_j-K_1, 0\} - \max\{K_1S_j, 0\} - \max\{K_2 S_j-K_2, 0\} - \max\{S_j, 0\}\\
		= & 2K_1 S_j-K_1 - K_1S_j - (K_2S_j-K_2) - S_j \tag{by assumption of $S_j > 1$}\\
		= & K_2 - K_1 - (K_2-K_1+1)S_j.
		\end{align*}
		Recall that we choose $K_2 \gg K_1$, e.g., $K_1=10|E|$ and $K_2=100|E|$.
		Thus, we have $z_j$ increase with rate $K_2-K_1+1$, as $S_j$ decreases uniformly.
		Since $z_e$ decreases at most $|E|$ and the rest two terms, $z_i$ and $g^\star$, do not depend on $S_j$, decreasing $S_j$ increases $z$.
		It is sub-optimal to have $S_j > 1$ for some $j$.
		
		Next, we assume $0 \leq S_j \leq 1$ for some $j$ and have
		\begin{align*}
		z_j %&= z_j + z_i + z_e - g^\star\\
		%f_j-g^{(1)}_j-g^{(2)}_j-g^{(3)}_j+\sum_{i\in V\setminus\{j\}}\left(f_i-g^{(1)}_i-g^{(2)}_i-g^{(3)}_i\right)+\sum_{e \in E} (f_e-g_e)-g^\star\\
		&= \begin{cases}\displaystyle
		-K_1S_j-S_j &\quad 0 \leq S_j \leq 0.5,\\
		\displaystyle
		K_1(S_j-1)-S_j &\quad 0.5 < S_j \leq 1.\\
		\end{cases}
		\end{align*}
		As $K_1$ is large, by a similar argument analyzing the growth rate of each term, we show that $z_j$ (and also $z$) increases by assigning $S_j$ to $0$ if $0 \leq S_j \leq 0.5$ and by assigning $S_j$ to $1$ if $0.5 < S_j \leq 1$.
		
		We finish proving Claim~\ref{thm:claim1}, and now have $S_i \in \{0,1\}$.
		Following Eq.~\eqref{eq:LHS}, we have
		\[z = -\sum_{i \in V}S_i + \sum_{e \in E} (f_e-g_e) - (|E|-k)+L+0.5.\]
		The goal is to maximize $z$ and show $z\leq L$.
		As $S_i$ is an integer, to show $z\leq L$, it suffices to show that $\sum_{e \in E} (f_e-g_e)-\sum_{i \in V}S_i<|E|-k$.
		We prove by contradiction, assuming $\sum_{e \in E} (f_e-g_e)-\sum_{i \in V}S_i \geq |E|-k$.
		Recall that to have $f_e-g_e = 1$ for $e=(i,j)$, we need at least one of $S_i,S_j$ to be $1$.
		We consider the following two possible cases, and aim to refute them:
		\begin{enumerate}[(a)]
			\setlength{\itemsep}{5pt}
			\item $\sum_{e \in E} (f_e-g_e)=|E|$ and $\sum_{i \in V}S_i \leq k$.
			
			This means we cover all edges with at most $k$ vertices assigned to 1.
			
			\item $\sum_{e \in E} (f_e-g_e)<|E|$ and $\sum_{i \in V}S_i < k$.
			
			For any $e$ with $f_e-g_e=0$, we can assign 1 to one of its end-points, and have $\sum_{e \in E} (f_e-g_e)=|E|$ without changing $\sum_{e \in E} (f_e-g_e)-\sum_{i \in V}S_i$. This leads back to (a).
		\end{enumerate}
		Both cases contradict to the fact that the Vertex Cover instance is a No instance.
		We have $\sum_{e \in E} (f_e-g_e)-\sum_{i \in V}S_i < |E|-k$ as desired, and thus $z \leq L$ for all $\S$.
\end{proof}
\vspace{1ex}

Using a slightly stronger version of \Thm{NP-complete} (see Theorem 5 in Appendix B.1), %(see Theorem~\ref{thm2var} in Appendix~\ref{app:proof_thm2}), 
we show that optimal clearing of a combinatorial options market is coNP-hard.
We defer the proof of \Thm{coNP-hard} to Appendix B.1.%Appendix~\ref{app:proof_thm2}.
\begin{theorem}
	\label{thm:coNP-hard}
	Optimal clearing of a combinatorial options market $(\vphi, \valpha, \vp, \vb, \vpsi, \vbeta, \vq, \va)$ is coNP-hard, even assuming that each combinatorial option is written on at most two underlying assets.
\end{theorem}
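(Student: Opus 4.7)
The plan is to reduce the coNP-complete complement of the decision problem in \Thm{NP-complete}---in its strengthened form (Theorem~5 in Appendix~B.1)---to the decision version of optimal clearing, ``is the optimum of \ref{mech:combo_match} at least a threshold $T$?''. Given a Theorem~5 instance consisting of options $(\vphi,\valpha,\vp,\vpsi,\vbeta,\vq)$ and a target $L^*$, I would append prices so that the optimum of \ref{mech:combo_match} collapses to $\vb^\top\vone-\va^\top\vone-L_{\min}(\vone,\vone)$, where $L_{\min}(\vgamma,\vdelta):=\sup_{\S\geq\vzero}[\Psi_\Gamma(\vgamma,\S)-\Psi_\Delta(\vdelta,\S)]$, making the threshold question precisely ``is $L_{\min}(\vone,\vone)\leq L^*$?''---the coNP-complete condition in question.

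Concretely, set every ask price $a_n=0$ and every bid price $b_m=B$ for a single polynomially-sized constant $B$ chosen strictly larger than the maximum per-unit payoff contribution $\Delta_m$ that any individual option can make to a worst-case witness $\S$, and pick $T=BM-L^*$ with $M$ the number of buy orders. For any fixed $(\vgamma,\vdelta)$ the best feasible $L$ in \ref{mech:combo_match} is exactly $L_{\min}(\vgamma,\vdelta)$, so the optimum reduces to $\max_{\vgamma,\vdelta}[B\sum_m\gamma_m-L_{\min}(\vgamma,\vdelta)]$. Because $a_n=0$, enlarging any $\delta_n$ is free in the objective and can only shrink $L_{\min}$ pointwise, so $\vdelta=\vone$ is weakly optimal. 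Because $B>\Delta_m$, shrinking $\gamma_m$ by $\epsilon$ costs $B\epsilon$ on the bid term but lowers $L_{\min}$ by at most $\epsilon\Delta_m$---a net loss---so $\vgamma=\vone$ is also optimal. Claim~1 in the proof of \Thm{NP-complete} confines the relevant witnesses $\S$ to $\{0,1\}^U$ on that specific construction, which makes every $\Delta_m$ polynomially bounded, so $B=\mathrm{poly}(\text{input})$ suffices and the reduction runs in polynomial time. With the all-ones match pinned, the optimum equals $BM-L_{\min}(\vone,\vone)$, which is $\geq T$ iff $L_{\min}(\vone,\vone)\leq L^*$; the two-asset-per-option restriction is inherited verbatim from the base construction.

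The principal obstacle is the bound on $\Delta_m$: it must hold \emph{uniformly} for every perturbed $(\vgamma,\vdelta)$ that the optimizer might drift toward, not only at the all-ones point where Claim~1 was established. If the Boolean-witness argument does not transfer directly to perturbed weights, a natural fallback is to augment the construction with cheap ``anchor'' options that penalize any $S_i\notin[0,1]$ in every configuration---for example, a deep out-of-the-money call whose payoff grows rapidly once $S_i$ leaves $[0,1]$---ensuring that every supremum witness remains in $\{0,1\}^U$ across all feasible $(\vgamma,\vdelta)$. Alternatively, one can scale the parameters of the \Thm{NP-complete} construction uniformly so that the gap between ``yes'' and ``no'' instances in $L^*$ grows polynomially while $B$ stays comparable, absorbing any slack that perturbation might exploit; either variant keeps the reduction polynomial-time and completes the coNP-hardness argument.
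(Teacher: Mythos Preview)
Your reduction is essentially the paper's: set all asks to a common small value (you use $0$, the paper uses a tiny $a>0$), set all bids to a common large $B$, and ask whether the optimum of \ref{mech:combo_match} reaches the threshold $BM-L^*$. The paper then does a case split on the value of $L$ in a candidate solution and, for the range $0\le L\le 0.25$, invokes the $\varepsilon$-gap conclusion of Theorem~5 directly to force $|\vgamma|<(1-\varepsilon)M$, rather than arguing through a per-coordinate Lipschitz constant $\Delta_m$ as you do.

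Your ``principal obstacle'' is not a real obstacle, and your fallbacks (anchor options, rescaling) are unnecessary. You do \emph{not} need $\Delta_m$ to be bounded at every perturbed $(\vgamma,\vdelta)$; you only need it at $(\vone,\vone)$. Compare the objectives globally rather than marginally: if $\S^{**}$ is an optimal witness for $L_{\min}(\vone,\vone)$, then plugging $\S^{**}$ into the supremum defining $L_{\min}(\vgamma,\vone)$ gives
\[
L_{\min}(\vone,\vone)-L_{\min}(\vgamma,\vone)\;\le\;\sum_m(1-\gamma_m)\,f_m(\S^{**}).
\]
Claim~1 in the proof of \Thm{NP-complete} (whose proof does not use the Yes/No status of the vertex-cover instance) confines $\S^{**}$ to $\{0,1\}^U$, so $f_m(\S^{**})\le 2K_1$ for every $m$, and any $B>2K_1$ pins the optimum at $(\vone,\vone)$. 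This is literally the inequality that drives the proof of Theorem~5 (the line $z'\ge z-\sum_i(1-\gamma_i)f_i-\sum_e(1-\gamma_e)f_e$), so you were already holding the tool you needed; you just applied it locally instead of globally. With that correction your argument is slightly cleaner than the paper's, since it avoids the three-way case split on $L$.
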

\subsection{A Constraint Generation Algorithm to Match Combinatorial Financial Options}
Since it is no longer practical to solve \ref{mech:combo_match} by identifying all constraints defined by different combinations of underlying asset values, we propose an algorithm (\Algo{comb_match}) that finds the exact optimal match through iterative constraint generation.
We first explain how \Algo{comb_match} works, and then prove that it is an equivalent formulation of \ref{mech:combo_match}.

At the core of \Algo{comb_match} is a constraint generation process, where a new optimization problem is defined per iteration to find a violated constraint.
The constraint set $\Cc$ includes the $\S \in \R^U_{\geq 0}$ value vectors that define different constraints. We can plug each $\S$ back into the constraint to define $\vf$ and $\vg$, which are the realized payoffs of bought and sold options with respect to the $\S$.
We start with the constraint set $\Cc$ of a zero vector, meaning that all underlying assets have price zero at expiration.
In each iteration, the upper-level optimization problem (\ref{eq:combo_upper}) computes the optimal match (i.e., $\vgamma^*$, $\vdelta^*$, and $L^*$) that satisfies this restrictive set of generated constraints.
Thus, it is a linear program with $\card{\Cc}$ constraints.
We then use the lower-level optimization problem (\ref{eq:comb_milp}) to find the $\S^*$ value vector that violates the returned matching solution the most.
That is, given $\vgamma^*$, $\vdelta^*$, and $L^*$, it generates the most adversarial realization of $\S$ that yields the worst-case payoff loss for the exchange at expiration.
We show that this lower-level optimization can be formulated as a mixed-integer linear program.
We then include this generated $\S^*$ vector into the constraint set.
%Repeat until no such $\S$ violates the matching solution.
% the upper level \ref{eq:combo_upper} is a linear program that computes the optimal solution that satisfies all generated constraints (i.e., realized payoffs w.r.t. different $\S$), and the lower level \ref{eq:comb_milp} is a mixed-integer linear program which in each iteration, generates an $\S$ that violates the upper-level constraint (i.e., constraint \ref{eq:constraint}) the most.
%The generated $\S$ is then included in the constraint set. %(that is the constraint defined in \ref{eq:combo_match})

The exact optimal match is returned when the lower-level \ref{eq:comb_milp} gives an objective value of zero, meaning there exists no such $\S$ that violates the upper-level matching assignment.
Therefore, the final number of constraints in $\Cc$ is the number of iterations that takes \Algo{comb_match} to terminate.
The algorithm trivially terminates finitely, but similar to the simplex method, it has no guarantee on the rate of convergence.
We later evaluate its performance on synthetic combinatorial options markets of different scales, and demonstrate that \Algo{comb_match} converges relatively fast, with the number of iterations increasing linearly in the size of a market instance.

We next prove \Algo{comb_match} returns the same optimal clearing solution as \ref{mech:combo_match}.
We first show in the Lemma that \ref{eq:comb_milp} finds the value vector $\S$ that violates the constraint \eqref{eq:constraint} in \ref{mech:combo_match} the most.
\begin{algorithm}[t]
	\caption{Match orders in a combinatorial options market.}
	\label{algo:comb_match}
	\begin{algorithmic}[1]
		\Statex \textbf{Input:}~%
		A combinatorial options market defined by
		%\Statex ~\hphantom{\textbf{Input:}}%
		$(\vphi, \valpha, \vp, \vb, \vpsi, \vbeta, \vq, \va)$.
		\Statex \textbf{Output:}~%
		An optimal clearing that matches $\vgamma^*$ buy orders to $\vdelta^*$ 
		%\Statex ~\hphantom{\textbf{Output:}}%
		sell orders.
		\medskip
		\State Initialize $z \gets \infty$, $\S \gets \vzero$, 
		\Statex ~\hphantom{Initializ} $\vf \gets \max\{\vphi(\valpha^\top \S-\vp), \vzero\}$, $\vg \gets \max\{\vpsi(\vbeta^\top \S-\vq), \vzero\}$, $\Cc \gets \{(\vf, \vg)\}$.
		%		\State Initialize $k \gets 0$, $\S_k \gets \vzero$, $\vf_k \gets \max\{\vphi(\valpha^\top \S_k-\vp), \vzero\}$
		%		\Statex $\vg_k \gets \max\{\vpsi(\vbeta^\top \S_k-\vq), \vzero\}$.  $\C \gets \{(\vf_k, \vg_k)\}$, $z \gets \infty$.
		\While {$z > 0$}
		\State Solve the following upper level LP and get the optimal $(\vgamma^*, \vdelta^*, L^*)$
		%\Statex ~\hphantom{w } get an optimal solution $(\vgamma^*, \vdelta^*, L^*)$
		\begin{align*}
		\label{eq:combo_upper}
		&\max \limits_{\vgamma, \vdelta, L}  \quad \vb^\top \vgamma - \va^\top \vdelta - L \tag{M.3U}\\
		&\quad\text{s.t.} \quad \vgamma^\top \vf - \vdelta^\top \vg \leq L \quad &\forall (\vf, \vg) \in \Cc
		\end{align*}
		\State Given $(\vgamma^*, \vdelta^*, L^*)$, solve the following lower level MILP  and get the optimal %$(\S^*, \vf^*, \vg^*, z^*)$
		\Statex ~\hphantom{w } $(\S^*, \vf^*, \vg^*, \vI^*, z^*)$ %\vspace{-1ex}
		\begin{align}
			\label{eq:comb_milp}
			%	\max \limits_{\vgamma, \vdelta, L}  \quad & \vb^\top \vgamma - \va^\top \vdelta - L \tag{P.3}\\
			%	\text{s.t.} \quad & \vgamma^\top \vf - \vdelta^\top \vg \leq L \qquad \qquad \quad \forall (\vf, \vg) \in \C \tag{P.3a}\\
			%	%
			& \max \limits_{\S, \vf, \vg, \vI} \quad z := \vgamma^\top \vf - \vdelta^\top \vg - L \tag{M.3L}\\
			& \text{  s.t.} \quad \quad \phi_m(\valpham^\top \S - p_m) \geq \M(\I_m-1) \notag\\
			& \hphantom{\text{  s.t.}} \quad \quad \phi_m(\valpham^\top \S - p_m) \leq \M \I_m \notag\\
			& \hphantom{\text{  s.t.}} \quad \quad f_m \leq \phi_m(\valpham^\top \S - p_m) - \M(\I_m-1) \notag\\
			& \hphantom{\text{  s.t.}} \quad \quad f_m \leq \M \I_m \notag\\
			& \hphantom{\text{  s.t.}} \quad \quad \I_m \in \{0, 1\} \qquad \qquad \forall m \in \{1,...,M\} \notag\\
			& \hphantom{\text{  s.t.}} \quad \quad g_n \geq \psi_n(\vbetan^\top \S - q_n) \notag\\
			& \hphantom{\text{  s.t.}} \quad \quad g_n \geq 0 \qquad \qquad \qquad \forall n  \in \{1,...,N\} \notag
 		\end{align}
		\State $\Cc \gets \Cc \cup (\vf^*, \vg^*)$, $z\gets z^*$
		\EndWhile
		\State \Return $\vgamma^*$ and $\vdelta^*$
		%        \EndIf
	\end{algorithmic}
\end{algorithm}
\begin{lemma}
	Given fixed $\vgamma$, $\vdelta$, and $L$ for a combinatorial options market $(\vphi, \valpha, \vp, \vb, \vpsi, \vbeta, \vq, \va)$, \ref{eq:comb_milp} returns the value of underlying assets $\S$ that violates the constraint of \ref{mech:combo_match} the most.
\end{lemma}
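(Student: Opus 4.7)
The plan is to show that for any fixed $(\vgamma,\vdelta,L)$, the MILP \eqref{eq:comb_milp} is equivalent to $\max_{\S \in \R_{\geq 0}^U}\bigl[\sum_m \gamma_m \max\{\phi_m(\valpham^\top \S - p_m), 0\} - \sum_n \delta_n \max\{\psi_n(\vbetan^\top \S - q_n), 0\} - L\bigr]$, which is precisely the amount by which the constraint \eqref{eq:constraint} of \ref{mech:combo_match} is violated. The strategy is to peel off the variables $\S$ from the auxiliaries $(\vf,\vg,\vI)$: for any fixed $\S$, I argue that the optimal auxiliaries encode the payoffs exactly, i.e., $f_m = \max\{\phi_m(\valpham^\top \S - p_m),0\}$ and $g_n = \max\{\psi_n(\vbetan^\top \S - q_n),0\}$. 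Once that is established, the outer maximization over $\S$ delivers the most-violating value vector.

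For the buy side, I would verify the standard big-$\M$ linearization carefully. Inspecting the first two constraints under the two cases of $\I_m \in \{0,1\}$: when $\I_m=0$ they reduce to $-\M \le \phi_m(\valpham^\top \S - p_m) \le 0$, and when $\I_m = 1$ they reduce to $0 \le \phi_m(\valpham^\top \S - p_m) \le \M$. Thus, whenever $\M$ is large enough to dominate the quantity $\phi_m(\valpham^\top \S - p_m)$, the choice of $\I_m$ is pinned down by the sign of $\phi_m(\valpham^\top \S - p_m)$ (with a harmless tie at zero). Case analysis on the last two constraints then yields $f_m \le 0$ when $\I_m = 0$ and $f_m \le \phi_m(\valpham^\top \S - p_m)$ when $\I_m = 1$. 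Since the objective coefficient $\gamma_m \ge 0$ under maximization, the optimum sets $f_m$ at its upper bound, giving $f_m = \max\{\phi_m(\valpham^\top \S - p_m),0\}$ as required.

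For the sell side the argument is a one-liner: the two constraints $g_n \ge \psi_n(\vbetan^\top \S - q_n)$ and $g_n \ge 0$ together give $g_n \ge \max\{\psi_n(\vbetan^\top \S - q_n),0\}$, and since the objective coefficient is $-\delta_n \le 0$, the optimum forces equality. Substituting the optimal $\vf$ and $\vg$ into the MILP objective $z = \vgamma^\top \vf - \vdelta^\top \vg - L$ recovers exactly $\Psi_\Gamma(\S) - \Psi_\Delta(\S) - L$, so outer maximization over $\S$ produces the most-violating $\S^*$.

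The main obstacle is the choice of $\M$: the encoding is only faithful when $\M$ exceeds $\abs{\phi_m(\valpham^\top \S - p_m)}$ and $\abs{f_m}$ for every $\S$ the optimizer could consider. Since $\S$ ranges over the unbounded cone $\R_{\geq 0}^U$, I would handle this by a dichotomy: the constraint-violation function is piecewise linear in $\S$, so either its supremum is attained at a breakpoint determined by the finitely many hyperplanes $\phi_m(\valpham^\top \S - p_m) = 0$ and $\psi_n(\vbetan^\top \S - q_n) = 0$, in which case $\S$ can be confined to a bounded region depending on the problem data and some finite $\M$ suffices; or the supremum is $+\infty$ along some ray, in which case the MILP correctly detects unboundedness and \ref{mech:combo_match} has no feasible assignment $(\vgamma,\vdelta,L)$ with that $L$ anyway. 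With $\M$ chosen accordingly, the case analysis above goes through and the lemma follows.
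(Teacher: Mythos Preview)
Your approach is essentially the same as the paper's: both argue that the big-$\M$ constraints on the buy side encode $f_m=\max\{\phi_m(\valpham^\top \S-p_m),0\}$ via case analysis on $\I_m\in\{0,1\}$, and that the sell-side constraints force $g_n=\max\{\psi_n(\vbetan^\top \S-q_n),0\}$ at the optimum, so the MILP objective collapses to the constraint-violation function of \ref{mech:combo_match}. The paper simply asserts ``$\M$ is a large constant, say $10^6$'' and moves on; your final paragraph on the choice of $\M$ and the bounded/unbounded dichotomy is additional care that the paper omits (though your claim that the MILP ``correctly detects unboundedness'' is not quite right---with a fixed finite $\M$ the feasible region in $\S$ is clipped, so the MILP would return a finite optimum even when the true violation is $+\infty$; this does not affect correctness of Algorithm~\ref{algo:comb_match} so long as a positive violation is still reported, but it does mean the lemma as literally stated requires the bounded-supremum case).
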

\begin{proof}
	First, it is easy to see that the formulation below returns the $\S$ that violates constraint \eqref{eq:constraint} the most, since we will have the largest feasible $\vf$ and the smallest feasible $\vg$ at the optimum.
	That is, $ f_m = \max\{\phi_m(\valpham^\top \S - p_m), 0\}$ and $g_n = \max\{\psi_n(\vbetan^\top \S - q_n), 0\}$.
		\begin{align}
		& \max \limits_{\S, \vf, \vg} \quad \vgamma^\top \vf - \vdelta^\top \vg - L \notag\\
		& \text{  s.t.} \quad \quad f_m \leq \max\{\phi_m(\valpham^\top \S - p_m), 0\} &\forall m \in \{1,...,M\} \notag\\
		& \phantom{\text{  s.t.}} \quad \quad g_n \geq \psi_n(\vbetan^\top \S - q_n) \notag\\
		& \phantom{\text{  s.t.}} \quad \quad g_n \geq 0 &\forall n  \in \{1,...,N\} \notag	
		\end{align}
	
	It remains to show the set of constraints related to any buy order $m$ in \ref{eq:comb_milp} is equivalent to $f_m \leq \max\{\phi_m(\valpham^\top \S - p_m), 0\}$.
	In short, the set of constraints in \ref{eq:comb_milp} linearize the max functions by using the big-$\M$ trick on each binary decision variable $\I_m$, where $\M$ is a large constant, say $10^6$.
	
	Consider each case of $\I_m \in \{0, 1\}$.
	The first two constraints, $\phi_m(\valpham^\top \S - p_m) \geq \M(\I_m-1)$ and $\phi_m(\valpham^\top \S - p_m) \leq \M \I_m$, guarantee that $\phi_m(\valpham^\top \S - p_m) \geq 0 \iff \I_m=1$.
	Then, following the third and fourth constraints, i.e., $f_m \leq \phi_m(\valpham^\top \S - p_m) - \M(\I_m-1)$ and $f_m \leq \M \I_m$, we have 
	\[f_m \leq 0 \quad \text{if $\I_m = 0$}; \quad f_m \leq \phi_m(\valpham^\top \S - p_m) \quad \text{if $\I_m = 1$.}\]
	Therefore, at the optimum, we have 
	\[f_m = 0 \quad \text{if $\I_m = 0$}; \quad f_m = \phi_m(\valpham^\top \S - p_m) \quad \text{if $\I_m = 1$,}\]
	which is equivalent to $f_m = \max\{\phi_m(\valpham^\top \S - p_m), 0\}$.
	%We have $\phi_m(\valpham^\top \S - p_m) \geq 0 \iff \I_m=1$ and $f_m = \phi_m(\valpham^\top \S - p_m) \iff \I_m=1$.
\end{proof}
Therefore, when \ref{eq:comb_milp} returns an objective value of zero, the constraint \eqref{eq:constraint} in \ref{mech:combo_match} is satisfied for all $\S$, and \Algo{comb_match} returns a valid match that optimizes for overall profit. %as specified by the \ref{eq:combo_upper} objective.
\begin{theorem}
	\label{thm:equivalent_form}
	Given a combinatorial options market instance $(\vphi, \valpha, \vp, \vb, \vpsi, \vbeta, \vq, \va)$, \Algo{comb_match} returns the optimal clearing defined in \ref{mech:combo_match}.
\end{theorem}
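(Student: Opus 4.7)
The plan is a standard cutting-plane equivalence argument, resting on the preceding Lemma. I will establish two facts: (i) the upper-level problem \ref{eq:combo_upper} at every iteration is a \emph{relaxation} of \ref{mech:combo_match}, and (ii) at termination the iterate $(\vgamma^*,\vdelta^*,L^*)$ is \emph{feasible} for \ref{mech:combo_match}. Taken together, (i) forces the returned objective to be an upper bound on the value of \ref{mech:combo_match}, while (ii) forces it to be attained by a feasible point; equality of upper and lower bounds then identifies the output as optimal.

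For (i), I would unpack the invariant maintained on the cut pool $\Cc$. Every pair $(\vf,\vg)$ ever inserted into $\Cc$ arises by plugging some $\S \in \R^U_{\geq 0}$ into the payoff functions, since the initialization uses $\S = \vzero$ and, by the preceding Lemma applied to line~5, subsequent insertions satisfy $f_m = \max\{\phi_m(\valpham^\top\S^* - p_m),0\}$ and $g_n = \max\{\psi_n(\vbetan^\top\S^* - q_n),0\}$ for the MILP optimizer $\S^*$. Hence each constraint $\vgamma^\top \vf - \vdelta^\top \vg \leq L$ in \ref{eq:combo_upper} is precisely the specialization of the universally-quantified constraint \eqref{eq:constraint} at some $\S$. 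The feasible set of \ref{eq:combo_upper} therefore contains that of \ref{mech:combo_match}, and since both share the same objective, $\text{OPT}(\ref{eq:combo_upper}) \geq \text{OPT}(\ref{mech:combo_match})$ at every iteration.

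For (ii), the termination criterion $z \leq 0$ combined with the Lemma is the crux. When the loop exits, \ref{eq:comb_milp} has just returned $z^* \leq 0$ on the iterate $(\vgamma^*,\vdelta^*,L^*)$. By the Lemma, $z^*$ equals the maximum over $\S \in \R^U_{\geq 0}$ of $\sum_m \gamma_m^* \max\{\phi_m(\valpham^\top \S - p_m),0\} - \sum_n \delta_n^* \max\{\psi_n(\vbetan^\top \S - q_n),0\} - L^*$. So $z^* \leq 0$ is logically equivalent to the full infinite family \eqref{eq:constraint} being satisfied at $(\vgamma^*,\vdelta^*,L^*)$, i.e., feasibility for \ref{mech:combo_match}. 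Combining with (i), the objective value at $(\vgamma^*,\vdelta^*,L^*)$ equals $\text{OPT}(\ref{eq:combo_upper}) \geq \text{OPT}(\ref{mech:combo_match})$, yet cannot exceed $\text{OPT}(\ref{mech:combo_match})$ since the iterate is feasible; equality pins $(\vgamma^*,\vdelta^*)$ as an optimal clearing.

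The main, and essentially the only, subtle point is guaranteeing that the identification underlying (i) is preserved across iterations, i.e., that every cut placed in $\Cc$ genuinely comes from some $\S$. This is immediate once the MILP reformulation of the Lemma is in hand, which is why that Lemma carries most of the technical weight; the present theorem is a short corollary built on top of it. No separate convergence analysis is required for the statement at hand, as the theorem is a conditional correctness claim about the output whenever the loop halts.
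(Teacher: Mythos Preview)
Your proposal is correct and follows essentially the same approach as the paper: the paper's argument is simply the one-line observation that when \ref{eq:comb_milp} returns objective zero the full constraint \eqref{eq:constraint} is satisfied, so the iterate is feasible and hence optimal. You have merely made the relaxation-plus-feasibility structure explicit, which is a faithful elaboration of the paper's terse reasoning rather than a different route.
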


\section{Experiments: Real-Market Standard Financial Options}
%OptionMetrics Data
\label{sec:exp_standard_option}
%We first show on real options data that our mechanism %improves efficiency
%finds matches that the current independent-market design cannot and provides more competitive bid and ask prices. %for options of any type and strike.
We first evaluate the proposed mechanism~\ref{mech:single_match} that matches orders on standard options across types and strike values.
We conduct empirical analysis on the OptionMetrics dataset provided by the Wharton Research Data Services (WRDS), which contains real-market option prices, i.e., the best bid and ask prices, for each market defined by an underlying asset, an option type, a strike price, and an expiration date.%
\footnote{Our data includes American options that allow exercise before expiration. 
	In practice, American options are %rarely exercised early, as in liquid markets it is always 
	almost always more profitable to sell than to exercise early \cite{Singh2019}.
	In experiments, we ignore early exercise and treat them as European options.}
We choose options data on 30 stocks that compose the DJI, as these stocks have actively traded options that cover a wide range of moneyness and maturity levels.
There are a total of 25,502 distinct options markets for the 30 stocks in DJI on January 23, 2019,
covering around 12 expiration dates for each stock.
%, yielding an average of 850 separate markets for each security.
%The offered options cover around 12 expiration dates for each stock, and thus there are about 70 independently-traded markets which have different combinations of types and strikes within the same security and expiration date (i.e., $30 \times 12 \times 70 \approx 25502$).

We use \ref{mech:single_match} to consolidate the outstanding buy orders and sell orders from independently-traded options markets that associated with the same security and expiration date.
This reduces the original 25,502 distinct markets to a total of 366 markets, which have standard options across different types and strikes.
We run \ref{mech:single_match} on these consolidated markets to
	\begin{enumerate}[(1)]
		\setlength{\itemsep}{2pt}
		\item find trades that the current independent-market design cannot match,
		\item compute new price quotes implied by this consolidated market design, and
		\item compare matches and price quotes to those of the case when we restrict $L$ to 0.
	\end{enumerate}
%are interested in matching orders (i.e., finding arbitrage opportunities) that fail to transact under the independent market design, computing new option quotes implied by our consolidated arbitrage-free markets, and comparing the case of restricting $L$ to 0 to the case of having $L$ as a decision variable.
%Table \ref{table:consol_options} summarizes our findings, and
% TO-DO

Out of the 366 consolidated options markets, we find profitable matches in 150 markets, which failed to transact under the independent market design.
Among those trades, 94 have non-negative $L$ (i.e., the case of \Ex{nonneg_L}), making an average net profit of \$1.03, with a maximum of \$9.64; 
the remaining 56 have negative $L$ (i.e., the case of \Ex{neg_L}),  implying an average interest rate of 0.7\%, with a maximum at 2.02\%.
%make profits at contract transaction  and the remaining 56 incur expenses for higher payoffs upon option expiration (e.g., \Ex{neg_L}).
%Matches with non-negative $L$ make an average net profit of \$1.03, with a maximum of \$9.64, and
%%found in options written on IBM that expire on February 15, 2019; 
%matches with negative $L$ imply an average interest rate of 0.7\%, with a maximum at 2.02\%. %found in options written on UNH that expire on January 17, 2020.
When we restrict $L$ to 0 (i.e., $L$ is no longer a decision variable), we are able to find matches in 74 markets.
Detailed statistics for options of each stock are available in the Appendix C. %Appendix~\ref{app:combo_options_exp}.

%The remaining 216 markets are arbitrage-free (for the flexible $L$ case), which capture a total of 16,088 option series and 32,176 orders.
%, one buy order and one sell order for each series.%
%\footnote{The OptionMetrics dataset records the corresponding best bid order and best ask order for each option series.}
%By running \ref{eq:single_match}, w
For the arbitrage-free consolidated markets (i.e., markets with no match returned by \ref{mech:single_match}), we find that approximately 49\% of the orders belong to the frontier set.
Using these orders to derive the most competitive bids and asks, we find that the bid-ask spreads can be reduced by 73\%, from an average of 80 cents for each option series in the independent markets to 21 cents in consolidated options markets.
For the case of $L$ set to 0, the bid-ask spreads can be reduced by 52\%.
%, the remaining 292 markets are arbitrage-free, and from an average of 83 cents for each option series in independent markets to 40 cents in the consolidated options markets.

These results show that the exchange, by consolidating options markets across types and strike prices, can potentially achieve a higher economic efficiency, matching orders that the current independent design cannot and providing more competitive bid and ask prices.
%We first show on real options data that our mechanism %improves efficiency
%finds matches that the current independent-market design cannot and provides more competitive bid and ask prices.
%aggregating independently-traded options leads to a more efficient market, with tightened bid-ask spreads and matches of options across types and strikes.

\section{Experiments: Synthetic Combinatorial Options Market}
\label{sec:exp_combo_option}
Since there is no combinatorial option traded in financial markets, we evaluate the proposed Algorithm~\ref{algo:comb_match} on synthetic combinatorial options, with prices calibrated using real-market standard options written on each related underlying security.%
\footnote{We adopt the same dataset as Section~\ref{sec:exp_standard_option}, and use standard options that expire on Febuary 1, 2019, to calibrate order prices for generated combinatorial options.}
% in markets with synthetic orders on combinatorial options written on different underlying asset combinations.
% Since there is no such combinatorial option traded in real financial markets, we evaluate the proposed heuristic algorithm in markets with synthetic orders on combinatorial options written on different underlying asset combinations.
% We generate synthetic orders using real-market option prices of each related underlying security, so that the generated prices realistically reflect the value of a combinatorial option, according to its underlying assets and associated coefficients.
We implement Algorithm~\ref{algo:comb_match} using Gurobi 9.0~\cite{gurobi}.
We are interested in quantifying the performance of \Algo{comb_match} in parametrically different markets that vary in the likelihood of matching, the number of orders, and the number of underlying assets.
% quantifying the performance of our proposed matching algorithm as we vary different aspects of the combinatorial options market.
We first describe our synthetic dataset.

\subsection{Generate Synthetic Orders}
We generate combinatorial options markets of $U$ underlying assets.
Each combinatorial option is written on a combination of two stocks, $S_i$ and $S_j$, randomly selected from the $U$ underlying assets.
This gives a total number of $U \choose 2$ asset pairs.
Weights for the selected assets, $w_i$ and $w_j$, are picked uniformly randomly from $\{\pm1, \pm2, \dotsc, \pm9\}$ and are preprocessed to be relatively prime.

We generate strikes and bid/ask prices using real-market standard options data related to each individual asset to realistically capture the value of the synthetic portfolio.
% We generate synthetic orders using real-market option prices of each related underlying security, so that the generated prices realistically reflect the value of a combinatorial option, according to its underlying assets and associated coefficients.
Let $\K_i$ and $\K_j$ respectively denote the set of strike prices offered by standard options on each selected asset.
We generate the strike $K$ by first sampling two strike values, $k_i \sim \K_i$ and $k_j \sim \K_j$, and scaling them by the associated weights to get $K = w_ik_i+w_jk_j$.
If $K$ is positive, we have a call option.
Instead, if $K$ is negative, we generate a put option, and update the strike to $-K$ and weights to $-w_i$ and $-w_j$ to comply with the representation and facilitate payoff computations.

We randomly assign each option to the buy or sell side of the market with equal probability, and generate a bid or ask price accordingly.
% , and decide its price based on real-market prices of standard options written on each related underlying security, similar to the way we quote the most competitive prices for standard options.%
Similar to the price quote procedure in Section~\ref{sec:standard_option}, we derive the bid $b$ by calculating the maximum gain of selling a set of standard options whose payoff is dominated by the combinatorial option of interest, and compute the ask $a$ by calculating the minimum cost of buying a set of standard options whose payoff dominates the generated conbinatorial option.
We add noises around the derived prices to control the likelihood of matching in a market, and set final prices to $b(1+\zeta)$ or $a(1-\zeta)$, where $\zeta \sim [0, \eta]$ and $\eta$ is a noise parameter. % of choice.
% Specifically, we increase each bid or decrease each ask by a random fraction $\zeta$ that is uniformly drawn from $[0, \eta]$, and set the final generated price as $b(1+\zeta)$ or $a(1-\zeta)$.

\subsection{Evaluation}
\begin{figure}[t]
	\centering
	\begin{subfigure}{0.48\columnwidth}	
		\centering
		\includegraphics[width=0.9\columnwidth]{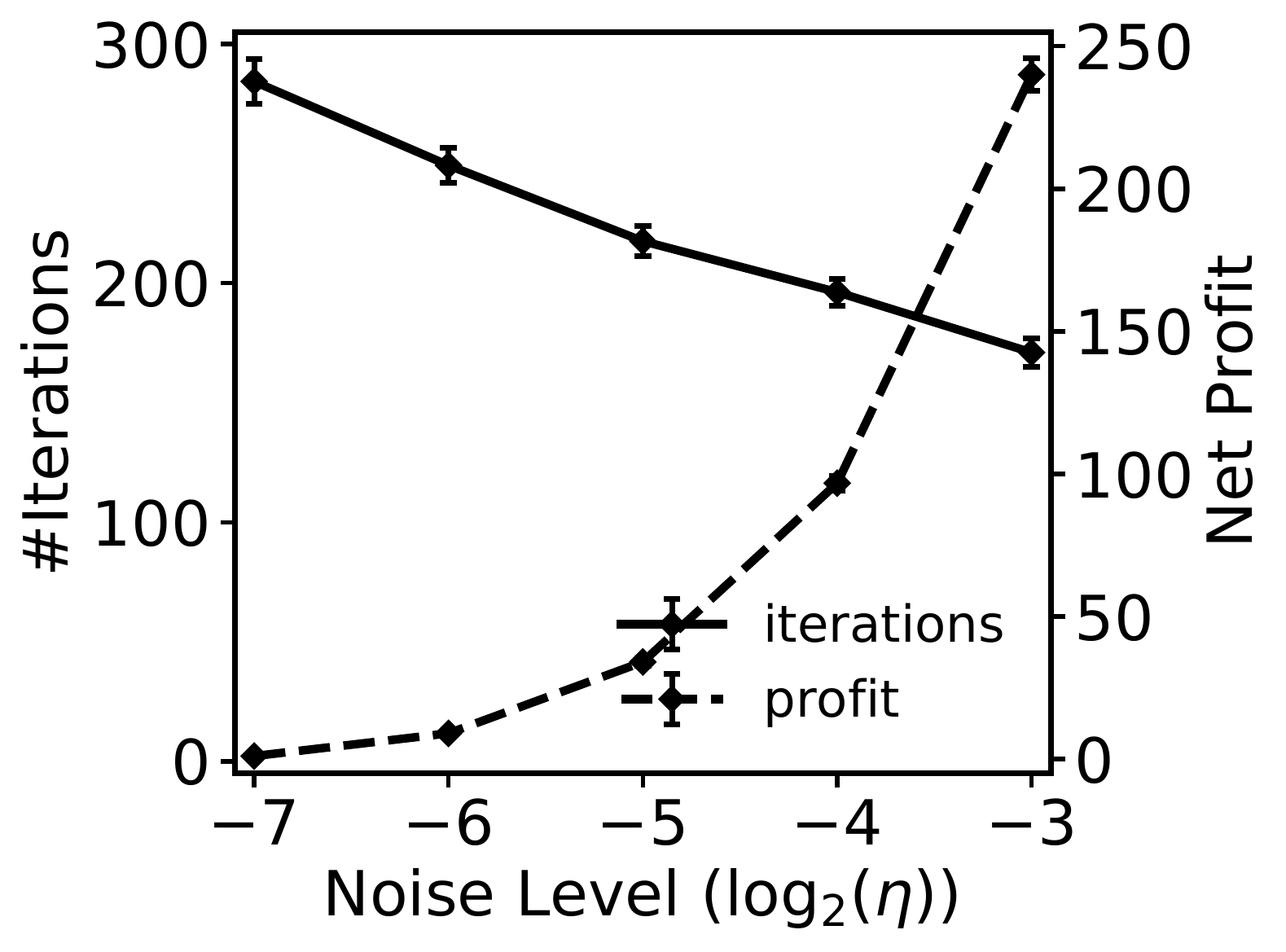}
		\caption{Vary noise $\eta$ added to order prices in markets with $U=4$ and $n_\text{orders}=150$.}
		\label{subfig:combo_plots_a}
		\vspace{2ex}
	\end{subfigure}
	\hspace{0.02\columnwidth}
	\begin{subfigure}{0.48\columnwidth}	
		\centering
		\includegraphics[width=0.9\columnwidth]{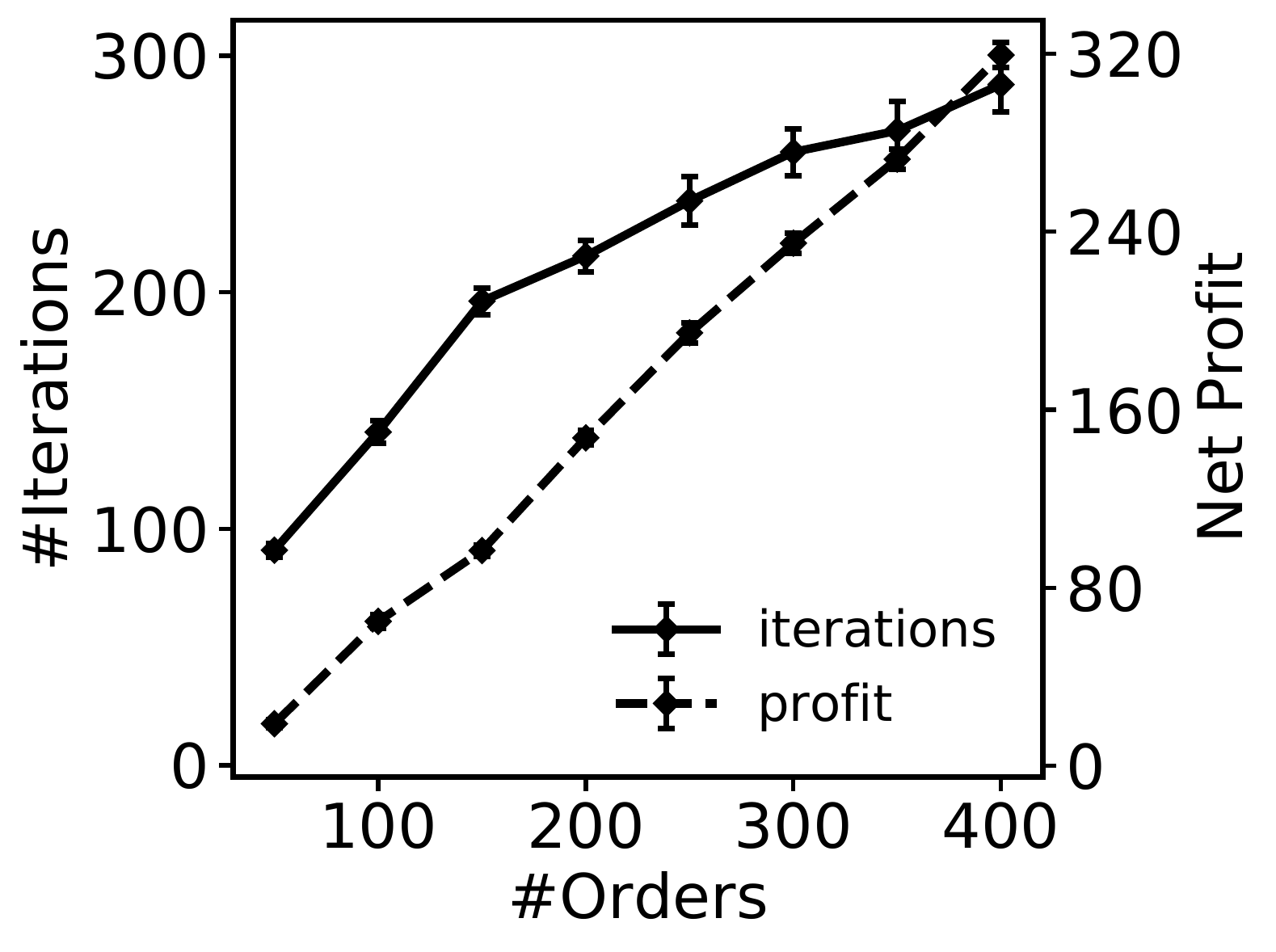}
		\caption{Vary number of orders $n_\text{orders}$ in markets with $U=4$ and $\eta = 2^{-4}$.}
		\label{subfig:combo_plots_b}
		\vspace{2ex}
	\end{subfigure}
	\hspace{0.02\columnwidth}
	\begin{subfigure}{0.48\columnwidth}	
		\centering
		\includegraphics[width=0.9\columnwidth]{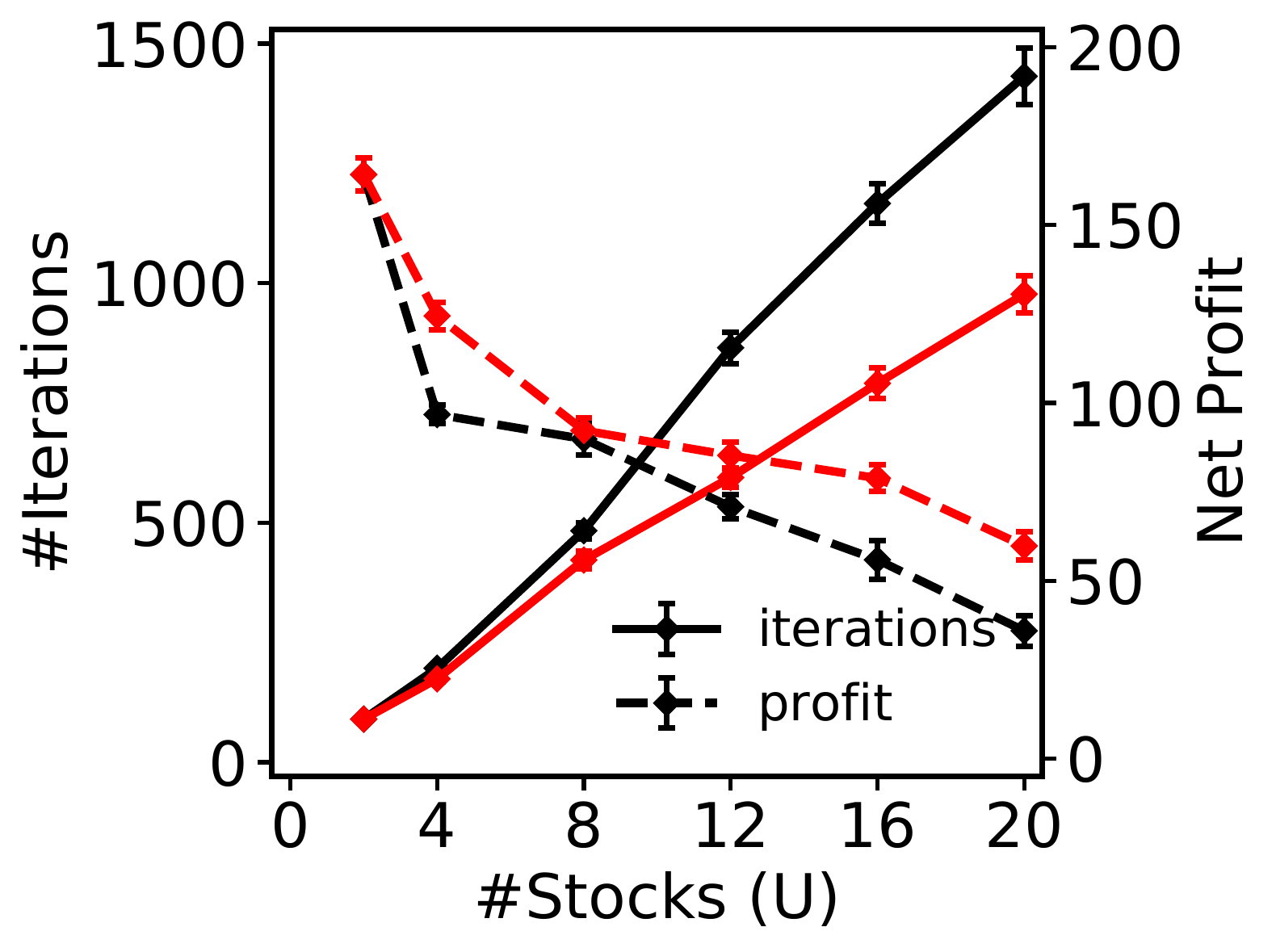}
		\caption{Vary size of underlying assets $U$ in markets with $n_\text{orders} = 150$ and $\eta = 2^{-4}$.}
		\label{subfig:combo_plots_c}
	\end{subfigure}
	\caption[Results of using \Mech{combo_match} to match orders in synthetic combinatorial options markets.]{Results of using \Mech{combo_match} to match orders in synthetic combinatorial options markets. The number of generated constraints (solid lines) and the net profits (dashed line), as the markets vary in price noise, the number of orders, and the size of underlying assets. Red lines represent markets that offer a restrictive set of asset pairs, which covers all $U$ underlying assets.}
	\label{fig:combo_plots}
\end{figure}
We explore a wide range of markets that vary in price noise $\eta$, the number of orders $n_\text{orders}$, and the number of underlying assets $U$.
For each market, we measure (1)~the number of iterations (i.e., the total number of constraints generated) that \Algo{comb_match} takes to find an exact optimal clearing and (2)~the net profit made from the trade.
% generated constraints (i.e., iterations) scale
% We investigate the effectiveness of our proposed heuristic algorithm that finds the exact optimal match in a range of combinatorial options markets, varying in the order price noise $\eta$, the number of orders $n_\text{orders} = M+N$, and the size of underlying assets $U$.
% We are interested in measuring how the number of generated constraints (i.e., iterations) scale, as we change each aspect of the market while fixing the others.
For all experiments, we show results averaged over 40 simulated markets, with the error bars denoted one standard error around the means.

We first validate that as larger noise is added to the derived bids and asks, the likelihood of matching in our simulated combinatorial options market becomes higher.
We generate markets with four underlying assets (arbitrarily selected from the 30 stocks in DJI) and 150 synthetic combinatorial options orders, and vary the noise level $\eta \in \{2^{-7}, 2^{-6}, 2^{-5}, 2^{-4}, 2^{-3}\}$. 
Figure~\ref{subfig:combo_plots_a} plots the averaged results. %of the five respective markets. 
As expected, the net profit made from the optimal match increases, as $\eta$ increases.
Moreover, we find that as the matching probability increases, the number of iterations that takes to find the optimal solution consistently decreases.
This makes sense as intuitively, in thin markets where few trades are likely to occur, the lower-level optimization program will keep coming up with candidate $\S$ values to refute a large number of matching proposals until convergence.

Figure~\ref{subfig:combo_plots_b} further quantifies the change in iteration numbers and net profits, as we vary the number of combinatorial options orders.
We fix these markets to have four underlying assets with a price noise of $2^{-4}$. 
As we see from Figure~\ref{subfig:combo_plots_b}, as a market aggregates more orders, transactions are more likely to happen, leading to larger net profits.
We also find that the number of generated constraints grows (sub)linear in the number of orders.
Since different $\S$ values are generated to define payoffs of \emph{distinct} options and the number of distinct options increases sublinear in the number of total orders, the rate of increase in the number of iterations tends to decrease as a market aggregates more orders.
%the number of constraints required to find the optimal match increases in a linear scale with respect to the number of orders.
% The increase in overall profit is also confirmed, as intuitively the more orders a market aggregates, the higher is the transaction likelihood.

Finally, we evaluate how \Algo{comb_match} scales to markets with increasingly larger numbers of underlying assets. %, for $U \in \{2, 4, 8, 12, 16, 20\}$.
In this case, as the dimension of $\S$ becomes large, the number of asset-value combinations grows exponentially.
Figure~\ref{subfig:combo_plots_c} (black lines) demonstrates a much faster increase in the number of iterations and a steady decrease in the net profit.%
\footnote{We report average runtimes to quantify the impact of increasing constraints.
	The average times (in seconds) that \Algo{comb_match} computes the optimal match are 4, 31, 47, 59, 68, and 72 for the respective markets with $U \in \{2, 4, 8, 12, 16, 20\}$. %20, 153, 233, 297, 344, and 358 for 5 markets
}
It suggests that as the market provides a large set of underlying assets (e.g., all 30 stocks in DJI), the thin market problem may still arise even when the mechanism facilitates matching options written on different combinations of underlying assets.
Here, in this set of experiments, we make the assumption that every asset pair in the $U \choose 2$ is equally likely to be traded.
In real markets, investors may be more interested in certain asset pairs, trading them more frequently than the others.
Based on such observations, a market can specify a prescriptive set of asset pairs, $\cP$, to alleviate the thin market problem.
For the experiments, we choose $\card{\cP} = U$ and have underlying securities in those asset pairs cover the $U$ assets.
Traders can choose from any asset pairs within this prescriptive set and specify custom weights for each security.
Figure~\ref{subfig:combo_plots_c} (red lines) shows that such prescriptive design may indeed attenuate the thin market problem, leading to a faster convergence and higher profits.
\section{Discussion}
\label{sec:conclusion}
% TO-DO: add a summary
The paper proposes market designs to facilitate trading standard options and the more general \emph{combinatorial financial options}.
We start by examining standard financial exchanges that operate separate options markets and have each independently aggregate and match orders on options of a designated type, strike price, and expiration.
When logically related financial markets run independently, traders may remove arbitrage and close bid-ask spreads themselves. 
Our OptionMetrics experiments show that they do so suboptimally.
Execution risk and transaction cost often prevent arbitrage opportunities from being fully removed.
Moreover, profits %in this zero-sum game
can flow to agents with better computational power and little information. 
Our consolidated market design, by putting computational power into the exchange, supports trading related options across different strikes and reduces bid-ask spreads, thus reducing the arms race among agents and rewarding informed traders only.

We extend standard options to define combinatorial financial options.
We are interested in designing a fully expressive combinatorial options market that allows options to be specified on all linear combinations of assets. 
In such a market, traders are granted the convenience and expressiveness to speculate on correlated movements among stocks and more precisely hedge their risks (e.g., in one transaction, purchase a put option on their exact investment portfolio).
Increasing expressiveness may also benefit the exchange: the market is able to incorporate information of greater detail to optimize outcome, improving economic efficiency, and obtain a surplus from trade to split between the exchange and traders.
Like many other combinatorial markets, the combinatorial options market comes at the cost of a higher computational complexity: optimal clearing of such a market is coNP-hard.
We propose a constraint generation algorithm to solve for the exact optimal match, and demonstrate its viability on synthetically generated combinatorial options markets of different scales. 

Two immediate questions arise from our work.
\emph{First}, can we design computationally efficient matching algorithms by limiting expressivity within a combinatorial options market or imposing certain assumptions on the underlying assets?
One next step is to explore naturally structured markets where combinations are limited to components in a graph of underlying assets. 
One special case is a hierarchical graph \cite{Guo2009}, for example, the \snp, sectors like travel and technology, subsectors like airlines and internet within those, etc. %, the travel sector, the tech sector, the internet subsector within tech, etc.
Another direction is to relax our current constraint that guarantees no loss for all possible states, assuming or learning a probability distribution over the security's future price (e.g., recovering probability from option prices~\cite{recover_prob_options}).

\emph{Second}, when operating such a combinatorial options market, how should an exchange set the market clearing rule?
%We note that the proposed mechanisms allow flexibility in the timing of market clear operations, and so can work in continuous or periodic (batch) form. Rather than advocate one clearing rule over the other, we acknowledge tradeoffs and conduct experiments on synthetic combinatorial orders to provide insight on how the algorithm would scale to different market settings.
While the proposed mechanism (\ref{mech:combo_match}) and matching algorithm (\Algo{comb_match}) allow flexibility in the timing of market clear operations, and so can work in either continuous or periodic (batch) form, adopting an appropriate clearing rule is important and can affect both computational and economic efficiency.
Our experiments on synthetic combinatorial options market (Section~\ref{sec:exp_combo_option}) provide some preliminary understanding.
Results on markets with low noise suggest potential limitations of continuous clearing: as there is no match among existing orders and the matching probability is low with only one incoming order, the lower-level program may take a long time to refute a large number of matching proposals until convergence. 
Surplus can also be low in a continuous clearing market that looks for immediate trades.%
\footnote{We conduct preliminary experiments on synthetic data and find that the average net profit achieved in a market with continuous clearning is about 30\% lower than that of a market with batch clearing of the same 100 orders.}
Batch clearing might be suitable for this combinatorial market, achieving a more practical convergence speed and higher surplus.
The next question becomes how frequently should the market clear? 
Like discussed in prior works studying the proposal of having financial markets move from continuous to batch clearing \cite{budish2014,budish2015,brinkman2017empirical}, choosing the appropriate clearing interval is an art.
In our case, it depends on the market scale (e.g., the number of underlying assets and orders) and liquidity (e.g., matching probability), and a designer needs to balance price discovery and matching surplus (longer batches improve trading surplus, but can significantly slow down price discovery).
Analysis based on empirical mechanism design~\cite{emd,brinkman2017empirical} may be useful to help a market designer set the optimal clearing interval and maximize efficiency.

We leave these questions open for future research.
%depending on various factors (e.g., the market size, clearing intervals), 
%We will add better discussions on these implications. We agree that comparing clearing designs for combinatorial options is an interesting direction, and will conduct systematic analysis on this topic as a next step.

\begin{acks}
We thank Miroslav Dud\'ik and Haoming Shen for helpful discussions. 
We are also grateful to the anonymous reviewers for constructive feedback. 
\end{acks}

% Bibliography
\bibliographystyle{ACM-Reference-Format}
\bibliography{refs}

% Appendix
\newpage
\appendix
     \section{Deferred Proofs from Section~\ref{sec:standard_option}}     
\subsection{Proof of \Thm{consolidate_standard_options}}
As the left-hand side of the constraint in \ref{mech:single_match} is a piecewise linear function of $S$, it suffices to solve \ref{mech:single_match} by including constraints defined by $S$ at breakpoint values.
In our case, breakpoints of the constraint are the strike values specified in the market, and the number of distinct strikes $n_{K}$ grows sublinear in the number of orders $n_\text{orders}$.
Therefore, we can specify the constraint in \ref{mech:single_match} as $n_K+2$ payoff constraints for each stock value $S \in {\vp} \cup {\vq} \cup \{0, \infty\}$.
Moreover, there are at most $n_\text{orders}$ quantity constraints, if we consider the quantity specified in each order.
Thus, \ref{mech:single_match} is a linear program with $n_{\text{orders}}+1$ decision variables and $\order(n_{\text{orders}})$ constraints.
\Mech{single_match} matches options written on the same underlying asset and expiration across all types and strikes in time polynomial in the number of orders.

\subsection{Proof of Price Quote Procedure}
\label{app:proof_price_quote}
We reiterate the procedure of using mechanism \ref{mech:single_match} to price a target options $(\chi, S, K, T)$ with existing options orders in the market defined by $(\vphi, \vp, \vb, \vpsi, \vq, \va)$.
\begin{enumerate}[(1)]
	\item The best bid $b^*$ for an option $(\chi, S, K, T)$ is the maximum gain of selling a portfolio of options that is \emph{weakly dominated} by $(\chi, S, K, T)$ for some constant $L$.
	
	We derive $b^*$ by adding $(\chi, S, K, T)$ to the sell side of the market indexed $N+1$, initializing its price $a_{N+1}$ to 0, and solving for \ref{mech:single_match}. 
	The best bid $b^*$ is the returned objective of \ref{mech:single_match}.
	
	\begin{proof}
		We show that the above procedure finds the best bid, by returning the maximum gain of selling a \textit{weakly dominated} portfolio of options.
		After adding $(\chi, S, K, T)$ to the market, we have the updated \ref{mech:single_match} as the following:
		\begin{align*}
		\max \limits_{\vgamma, \vdelta, L} & \displaystyle \quad \vb^\top \vgamma - \va^\top \vdelta - a_{N+1}\delta_{N+1} - L\\
		\text{s.t.} & \displaystyle \quad \sum_{m} \gamma_m \max\{\phi_m(S - p_m), 0\} - \sum_{n} \delta_n \max\{\psi_n(S - q_n), 0\} \\
		& \displaystyle \quad - \delta_{N+1} \max\{\chi(S-K), 0\} \leq L \qquad \qquad \qquad \forall S \in [0, \infty)
		\end{align*}
		Since we set $a_{N+1} = 0$, it is always optimal to buy option $(\chi, S, K, T)$ and have $\delta_{N+1} = 1$.
		Therefore, we have the following optimization problem extended from \ref{mech:single_match}:
		\begin{align*}
		\max \limits_{\vgamma, \vdelta, L} & \displaystyle \quad z := \vb^\top \vgamma - \va^\top \vdelta - L \tag{M.1-bid} \label{bid}\\
		\text{s.t.} & \displaystyle \quad \,\,\underbrace{\!\!\sum_{m} \gamma_m \max\{\phi_m(S - p_m), 0\} - \sum_{n} \delta_n \max\{\psi_n(S - q_n), 0\}\!\!}_{\text{Portfolio } (*)}\,\,\\ 
		&\qquad \qquad \qquad \qquad  \qquad \qquad \leq \max\{\chi(S-K), 0\} + L \quad \forall S \in [0, \infty)
		\end{align*}
		Following Definition~\ref{def:opt_dominate}, the left-hand side of the above inequality describes the payoff of a portfolio of existing options that is weakly dominated by the target options $(\chi, S, K, T)$ with an offset $L$, and the objective $z$ maximizes the gain of selling such a portfolio.
		One is willing to buy $(\chi, S, K, T)$ at the highest price (i.e., the best bid) $b^* = z$, as one can always sell the weakly dominated Portfolio ($*$) and get $z$ back without losing anything in the future.
	\end{proof}
	
	\item The best ask $a^*$ for an option $(\chi, S, K, T)$ is the minimum cost of buying a portfolio of options that \emph{weakly dominates} $(\chi, S, K, T)$ for some constant $L$.
	
	We derive $a^*$ by adding $(\chi, S, K, T)$ to the buy side of the market indexed $M+1$, initializing its price $b_{M+1}$ to a large number (i.e., $10^6$), and solving for \ref{mech:single_match}. 
	The best ask $a^*$ is then $b_{M+1}$ minus the returned objective.
	
	\begin{proof}
		We show that the above procedure finds the best ask, by returning the minimum cost of buying a \textit{weakly dominant} portfolio of options.
		After adding $(\chi, S, K, T)$ to the market, we have the updated \ref{mech:single_match} as the following:
		\begin{align*}
		\max \limits_{\vgamma, \vdelta, L} & \displaystyle \quad \vb^\top \vgamma + b_{M+1}\gamma_{M+1} - \va^\top \vdelta - L\\
		\text{s.t.} & \displaystyle \quad \sum_{m} \gamma_m \max\{\phi_m(S - p_m), 0\} + \gamma_{M+1} \max\{\chi(S-K), 0\}\\
		& \displaystyle \quad - \sum_{n} \delta_n \max\{\psi_n(S - q_n), 0\} \leq L \qquad \qquad \forall S \in [0, \infty)
		\end{align*}
		We set $b_{M+1}$ to a sufficiently large number, say $b_{M+1} = 10^6$, so that it is always optimal to sell option $(\chi, S, K, T)$ and thus have $\gamma_{M+1} = 1$.
		Therefore, we have the following optimization problem:
		\begin{align*}
		\max \limits_{\vgamma, \vdelta, L} & \displaystyle \quad z := \vb^\top \vgamma + b_{M+1} - \va^\top \vdelta - L \tag{M.1-ask} \label{ask}\\
		\text{s.t.} & \displaystyle \quad \max\{\chi(S-K), 0\} \leq\\
		& \displaystyle \quad \,\,\underbrace{\!\!-\sum_{m} \gamma_m \max\{\phi_m(S - p_m), 0\} + \sum_{n} \delta_n \max\{\psi_n(S - q_n), 0\}\!\!}_{\text{Portfolio } (*)}\,\, + L\\
		& \qquad \qquad \qquad \qquad \qquad \qquad \qquad \qquad \qquad \qquad \qquad \qquad \qquad \qquad \forall S \in [0, \infty)
		\end{align*}
		Following Definition~\ref{def:opt_dominate}, the right-hand side of the above inequality describes the payoff of a portfolio of existing options that weakly dominates the target options $(\chi, S, K, T)$ with an offset $L$.
		Since $b_{M+1}$ is a fixed constant, the objective that maximizes for $z$ is equivalent to minimizing for $-\vb^\top\vgamma+\va^\top\vdelta+L$, which is the net cost of buying Portfolio ($*$) plus $L$.
		One is willing to sell $(\chi, S, K, T)$ at the lowest price (i.e., the best ask) $a^* = b_{M+1} - z$, as one can always pay $a^*$ and buy back a weakly dominant Portfolio ($*$) without losing anything in the future.
	\end{proof}
\end{enumerate}

\subsection{Proof of Corollary~\ref{coro:frontier_set}}
\label{app:proof_corollary}
We start by showing that in order to quote the most competitive prices (i.e., the highest bid and the lowest ask) for any target option $(\chi, S, K, T)$, it suffices to consider options orders in $\F$.
We prove by contradiction and consider the following two cases:
\begin{enumerate}[(1)]
	\item Suppose that there exists a bid order $o \notin \F$ with $\gamma_o > 0$ in the Portfolio ($*$), which is the optimal portfolio constructed to derive the highest bid or lowest ask for $(\chi, S, K, T)$.\\
	Since $o \notin \F$, then by the contrapositive of Definition~\ref{def:opt_frontier}, the bid of $o$ can be improved by a portfolio of other orders, denoted Portfolio $o^*$, which is weakly dominated by $o$ with some constant offset $L^*$.
	We denote $b_o$ the bid price specified in order $o$ and $\Pi_{o^*}$ the revenue of selling portfolio $o^*$.
	Then, we have 
	\[z^* := \Pi_{o^*} - L^* > b_o \quad \text{and} \quad \Psi_{o^*} \leq \Psi_o + L^*.\]
	This means that we can replace $o$ with Portfolio $o^*$ and the $L^*$ without violating the constraints in \ref{bid} and \ref{ask} (since $\gamma_o \Psi_o \geq \gamma_o(\Psi_{o^*}-L^*)$), and improve the objective by $\gamma_o(z^*-b_o) > 0$.
	This contradicts our premises, and thus to derive the most competitive prices for any option $(\chi, S, K, T)$, we have $\gamma_o = 0$ for all $o \notin \F$.
	
	\item Similarly, suppose that there exists an ask order $o \notin \F$ with $\delta_o > 0$ in the Portfolio ($*$), which is the optimal portfolio constructed to derive the highest bid or lowest ask for $(\chi, S, K, T)$.\\
	Since $o \notin \F$, then by the contrapositive of Definition~\ref{def:opt_frontier}, the ask of $o$ can be improved by a portfolio of other orders, denoted Portfolio $o^*$, which weakly dominates $o$ with some constant offset $L^*$.
	We denote $a_o$ the ask price specified in order $o$ and $\Pi_{o^*}$ the cost of buying portfolio $o^*$.
	Then, we have 
	\[z^* := \Pi_{o^*} - L^* < a_o \quad \text{and} \quad \Psi_{o} \leq \Psi_{o^*} + L^*.\]
	This means that we can replace $o$ with Portfolio $o^*$ and the $L^*$ without violating the constraints in \ref{bid} and \ref{ask} (since $\delta_o \Psi_o \leq \delta_o(\Psi_{o^*}+L^*)$), and improve the objective by $\delta_o(a_o-z^*) > 0$.
	This contradicts our premises, and thus to derive the most competitive prices for any option $(\chi, S, K, T)$, we have $\delta_o = 0$ for all $o \notin \F$.
\end{enumerate}
Therefore, to quote the most competitive prices for any target option $(\chi, S, K, T)$, it suffices to consider options orders in $\F$.
Similar proofs hold for deciding the existence of matching: if an order $o \notin \F$ appears in the matched portfolio, we can always improve the objective by substituting $o$ with Portfolio $o^*$.
Thus, to determine the price quotes and the existence of a match, it suffices to consider options orders in $\F$.
Following \Thm{consolidate_standard_options}, our proposed mechanism \ref{mech:single_match} determines price quotes and the existence of a match in time polynomial in the size of the frontier set.

\section{Deferred Proofs from Section~\ref{sec:combo_options}}
\subsection{A Variation of Theorem~\ref{thm:NP-complete}}
\label{app:proof_thm2}
%The decision problem described in Theorem~\ref{thm:NP-complete} is in NP.
%Given a certificate which is a value vector $\S \in \R^U_{+}$, we plug $\S$ into constraint in \Mech{combo_match} to compute the payoff and check whether it is less than $L$.
%This takes time $\order(U(M+N))$.
For the NP-hardness, we prove the following stronger statement, which we will later use directly to prove Theorem~\ref{thm:coNP-hard}.

\begin{theorem} [Variation of Theorem~\ref{thm:NP-complete}]
	\label{thm2var}
	Consider all combinatorial options in the market $(\vphi, \valpha, \vp, \vpsi, \vbeta, \vq)$. 
	For any fixed $L$, it is NP-hard to decide
	\begin{itemize}
		\item Yes: $\vgamma = \vdelta = \vone$ violates constraint \eqref{eq:constraint} in \ref{mech:combo_match} for some $\S$. Moreover, there exists a function $\varepsilon:\mathbb{Z}\to\mathbb{R}^+$ such that given the fixed $L$, for any $(\vgamma,\vdelta)$ that satisfies\vspace{1ex}\\ 
		\vspace{1ex}
		$\sum_{m} \gamma_m \max\{\phi_m(\valpham^\top \S - p_m), 0\} - \sum_{n} \delta_n \max\{\psi_n(\vbetan^\top \S - q_n), 0\} \leq L + 0.25 \quad \forall \S \in \R_{\geq 0}^U$,\\ %constraint~(2) in P.2 with the right-hand side of the inequality changed from $L$ to $L+0.25$, 
		we have $\frac{|\vgamma|}M<1-\varepsilon(M)$,
		\item No: $\vgamma = \vdelta = \vone$ satisfies constraint \eqref{eq:constraint} for all $\S$,
	\end{itemize}
	even assuming that each combinatorial option is written on at most two underlying assets.
\end{theorem}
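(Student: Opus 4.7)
My plan is to reuse the exact reduction from Vertex Cover used in the proof of Theorem~\ref{thm:NP-complete}. Because the construction and the No-instance direction are unchanged, all the new work is in strengthening the Yes-instance analysis to also yield the claimed bound on $|\vgamma|/M$. I keep the same option collection $\{f_i,g_i^{(1)},g_i^{(2)},g_i^{(3)}\}_{i\in V}\cup\{f_e,g_e\}_{e\in E}\cup\{g^\star\}$ and the same constants $K_1=10|E|$, $K_2=100|E|$, so $M=|V|+|E|$ as before.

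For the Yes direction I use the same witness $\S^*$ built from a vertex cover of size $k$: set $S_i=1$ for $i$ in the cover and $S_i=0$ otherwise. The original computation already shows that when $\vgamma=\vdelta=\vone$ the left-hand side of constraint~\eqref{eq:constraint} evaluates to exactly $L+0.5$ at $\S^*$. Since $L+0.5 > L+0.25$, any $(\vgamma,\vdelta)$ satisfying the $0.25$-relaxed constraint must cut the value at this single point $\S^*$ by at least $0.25$. Subtracting the relaxed-feasibility inequality at $\S^*$ from the all-ones evaluation gives
\begin{equation*}
\sum_m (1-\gamma_m)\, f_m(\S^*) \;-\; \sum_n (1-\delta_n)\, g_n(\S^*) \;\geq\; 0.25,
\end{equation*}
and since $g_n(\S^*)\geq 0$ and $1-\delta_n\geq 0$, the sell term only works against the adversary. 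Dropping it yields
\begin{equation*}
\sum_m (1-\gamma_m)\, f_m(\S^*) \;\geq\; 0.25.
\end{equation*}

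To finish, I bound $f_m(\S^*)$ uniformly from above. At this specific $\S^*$ we have $f_i(\S^*)=K_1$ for covered vertices and $0$ for the rest, while $f_e(\S^*)=S_i+S_j\leq 2$ for every edge; hence $f_m(\S^*)\leq K_1 = 10|E|\leq 10M$ for every $m$. Plugging this in gives $M-|\vgamma|=\sum_m(1-\gamma_m)\geq 0.25/(10M)=1/(40M)$, i.e., $|\vgamma|/M\leq 1 - 1/(40M^2)$, so $\varepsilon(M):=1/(40M^2)$ is a valid choice. The main subtlety to keep in mind is the sign of the $\vdelta$-contribution: I need the adversary to be unable to ``pay'' for the missing $0.25$ gap at $\S^*$ by reducing some $\delta_n$, and this is exactly what $g_n(\S^*)\geq 0$ guarantees in the derived inequality. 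Beyond that, only a single witness $\S^*$ is needed rather than a family, so no machinery beyond the original reduction is required.
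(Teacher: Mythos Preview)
Your proposal is correct and follows essentially the same route as the paper: same reduction, same witness $\S^*$, same use of nonnegativity of option payoffs to discard the $\vdelta$-contribution, and the same uniform upper bound on $f_m(\S^*)$ to convert the $0.25$ gap into a lower bound on $M-|\vgamma|$. The only differences are cosmetic (you subtract the two evaluations at $\S^*$ whereas the paper directly lower-bounds $z'$) and in the constant; to obtain the \emph{strict} inequality $|\vgamma|/M<1-\varepsilon(M)$ you should either invoke $K_1=10|E|<10M$ (which holds since $|V|\ge1$) rather than the looser $10M$, or simply take $\varepsilon(M)=1/(80M^2)$ as the paper does.
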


\begin{proof}
	We follow the construction in Theorem~\ref{thm:NP-complete}.
	To conclude the proof for the \emph{Yes} instance case, we find the function $\varepsilon(\cdot)$ such that for \textit{any} $(\vgamma,\vdelta)$ with $|\vgamma|\geq M(1-\varepsilon(M))$, there exists a $\S$ that violates constraint  \eqref{eq:constraint} even if the $L$ on the right-hand side is changed to $L+0.25$.
	We keep the assignment of $\S$ as before, and the left-hand side of constraint  \eqref{eq:constraint} is as the following:
	\begin{align*}
	%&\text{Left-hand Side of the constraint in \eqref{eq:combo_match}}\\
	z' := & \sum_{i\in V} \left(\gamma_i f_i - \delta_i^{(1)} g_i^{(1)} - \delta_i^{(2)} g_i^{(2)} - \delta_i^{(3)} g_i^{(3)}\right) + \sum_{e \in E}(\gamma_e f_e - \delta_e g_e) - \delta^\star g^\star\\
	\geq & \sum_{i\in V} \left(\gamma_i f_i - g_i^{(1)} - g_i^{(2)} - g_i^{(3)}\right) + \sum_{e\in E}(\gamma_e f_e - g_e) - g^\star \tag{since $\delta_n\in[0,1]$ and option payoffs are non-negative}\\
	= & z - \sum_{i\in V} (1-\gamma_i) f_i - \sum_{e\in E}(1-\gamma_e) f_e\\
	\geq & z - M \cdot 2K_1 + \sum_{i\in V} \gamma_i \cdot 2K_1 + \sum_{e\in E} \gamma_e \cdot 2K_1 \tag{since $\forall m:f_m\leq 2K_1$ and $M=|V|+|E|$ }\\
	= & z - M \cdot 2K_1 + |\vgamma| \cdot 2K_1 \tag{$|\vgamma| = \sum_{m} \gamma_m$}\\
	\geq & z - M \varepsilon(M) \cdot 2K_1 \tag{since $|\vgamma|\geq M(1-\varepsilon(M))$}\\
	=&L+0.5-M\varepsilon(M)\cdot 2K_1
	\end{align*}
	It suffices to choose $\varepsilon$ such that $M\varepsilon(M)\cdot2K_1<0.25$.
	Recall that $K_1=10|E| < 10M$.
	We can choose, say $\varepsilon=\frac1{80M^2}$.
\end{proof}
\subsection{Proof of Theorem~\ref{thm:coNP-hard}}
We reduce this decision problem from the decision problem in Theorem~\ref{thm2var} with $L=0$ and each combinatorial option is written on at most two underlying assets.

The reduction is as follows.
The instance of the optimization problem have the same $\valpha,\vbeta,\vp,\vq$ as given in the instance for the decision problem.
In other words, the reduction keeps the same for $f_1,\ldots,f_M,g_1,\ldots,g_N$.
Set $a_1=\cdots=a_N=a$, and set $b_1=\cdots=b_M=b$, where $a>0$ is sufficiently small and $b>0$ is sufficiently large.
We will decide both values later.

If the decision problem instance is a \emph{No} instance, we know that the constraint~(2) holds for $f_1,\ldots,f_M$, $g_1,\ldots,g_N$,  $\gamma_1=\cdots=\gamma_M=\delta_1=\cdots=\delta_N=1$ and $L=0$.
Under this feasible assignment for $\vgamma,\vdelta$ and $L$, we have $\vb^\top \vgamma - \va^\top \vdelta - L=Mb-Na$.

If the decision problem instance problem is a \emph{Yes} instance, we aim to show that $\vb^\top \vgamma - \va^\top \vdelta - L<Mb-Na$ for any feasible $\vgamma,\vdelta,L$.
We discuss three different cases: $L>0.25$, $0\leq L\leq0.25$, and $L<0$.

For $L>0.25$, we have $\vb^\top \vgamma - \va^\top \vdelta - L<\vb^\top \vgamma-0.25$.
Since each entry of $\vgamma$ is at most $1$, the maximum of $\vb^\top\vgamma$ is $Mb$.
Putting together, we have $\vb^\top \vgamma - \va^\top \vdelta - L<Mb-0.25$, which is less than $Mb-Na$ if $a$ is set such that $a<1/4N$.
We will fix $a=1/8N$ from now on.

For $0\leq L\leq0.25$, Theorem~\ref{thm2var} ensures that there exists an $\varepsilon$ which depends only on $M$ such that any feasible $\vgamma,\vdelta$ satisfy $|\vgamma|<(1-\varepsilon)M<M$.
Notice that $L$ is set to $0$ in the instance we are reducing \emph{from}, and $L$ here is between $0$ and $0.25$.
These make the statement corresponding to the \emph{Yes} case of Theorem~\ref{thm2var} apply.
Therefore, the objective $\vb^\top\vgamma-\va^\top\vdelta-L\leq\vb^\top\vgamma\leq (1-\varepsilon)Mb$ is strictly less than $Mb-Na$ if $b$ is set such that $b>\frac{Na}{\varepsilon M}$ (notice that $\varepsilon$ in Theorem~\ref{thm2var} does not depend on b).

For $L<0$, notice that substituting $\vgamma=\mathbf{0},\vdelta=\mathbf{1},\S=\mathbf{0}$ to the left-hand side of (2) gives an upper-bound to $-L$.
Let $L^\ast$ be this upper-bound. 
Notice that $L^\ast$ only depends on $\valpha,\vbeta,\vp,\vq$ and is computable in polynomial time.
In the case the decision problem instance is a \emph{Yes} instance, we know that any feasible $\vgamma,\vdelta$ satisfy $|\vgamma|<(1-\varepsilon)M$ (this is already the case for $L=0$, and the feasible region for $(\vgamma,\vdelta)$ can only be smaller for negative $L$).
Therefore, we have $\vb^\top\vgamma-\va^\top\vdelta-L<Mb(1-\varepsilon)+L^\ast$, which is less than $Mb-Na$ if $b$ satisfies $b>\frac{Na+L^\ast}{\varepsilon M}$.
By setting $b=\frac{Na+L^\ast}{\varepsilon M}+1$, the theorem concludes.
Notice that $L^\ast$ and $\varepsilon$ only depend on $\valpha,\vbeta,\vp,\vq$, so our definition of $b$ is valid.

\newpage
\section{Deferred Experimental Results}
\label{app:combo_options_exp}
%We implement \Mech{single_match} and Algorithm~\ref{algo:comb_match} using Gurobi and conduct experiments on an AWS m5a.8xlarge instance.
%For consolidating standard options related to the same underlying asset and expiration date, we compare
%the two cases where $L$ is treated as a decision variable and $L$ is fixed to 0. We attach the detailed statistics for options of each stock below. 
%For Algorithm~\ref{algo:comb_match}, we use $\M=10^6$ throughout all experiments.
%\vspace{-8ex}
\subsection{Statistics of options on each stock using M.1 with L as a decision variable}

\begin{table}[H]
	\caption{Summary statistics (matching) of options on each stock in DJI by consolidating options related to the same underlying asset and expiration date.}
	\small
	\setlength\tabcolsep{3pt}
	\setlength{\textfloatsep}{3pt}
	\begin{center}
		\begin{tabular}{p{0.08\textwidth} | p{0.12\textwidth} p{0.11\textwidth} p{0.12\textwidth} | p{0.10\textwidth} p{0.10\textwidth} p{0.10\textwidth} p{0.12\textwidth}}
			\hline
			\rule{0pt}{10pt}
			Stock &\#markets &\#expirations &\#markets per expiration & \#matches ($L \geq 0$) & avg profit &\#matches ($L < 0$) & avg interest rate (\%)\\
			\hline\hline
			AAPL  & 1452  & 14   &  104   & 1   & 0.8    & 3     & 0.94\\ 
			AXP  & 804  & 11   &  73   & 4   & 0.5    & 2     & 0.99\\ 
			BA  & 1694  & 14   &  121   & 5   & 3.35    & 3     &0.66\\ 
			CAT  & 968  & 13   &  74   & 6   & 0.18    & 2     &0.69\\ 
			CSCO  &728  & 12   &  61   & 1   & 0.02    & 2     &0.44\\ 
			CVX   &742  & 12   &  62   & 7   & 1.26    & 2     &0.41\\ 
			DD   &778  & 13   &  60   & 1   & 0.32    & 3     &0.45\\ 
			DIS  &806  & 13   &  62   & 5   & 0.86    & 0     &0\\ 
			GS   &1110  & 12   &  93   & 1   & 0.16    & 3     &0.82\\ 
			HD   &908  & 13   &  70   & 1   & 0.02    & 2    &0.17\\ 
			IBM  &848  & 12   &  71   & 4   & 3.45    & 1     &0.91\\ 
			INTC  &662  & 12   &  55   & 2   & 0.54    & 1     &0.86\\ 
			JNJ  &850  & 12   &  71   & 4   & 0.12    & 2     &0.57\\ 
			JPM  &854  & 13   &  66   & 3   & 0.4    & 3     &0.7\\ 
			KO  &618  & 13  &  48   & 4   & 0.2    & 2     &0.34\\ 
			MCD  &692  & 12   & 58   & 1   & 0.8    & 3     &0.44\\ 
			MMM  &804  & 12   &  67   & 5   & 2.23    & 2     &0.41\\ 
			MRK  &766  & 12   &  64   & 3   & 0.18    & 2     &0.42\\ 
			MSFT  &1194  & 13   & 92   & 1   & 0    & 0     &0\\ 
			NKE  &844  & 12   &  70   & 0   & 0    & 2     &0.6\\ 
			PFE  &640  & 12   &  53   & 6   & 0.84    & 1     &0.59\\ 
			PG  &786  & 12   &  66   & 3   & 0.27    & 2     &0.31\\ 
			RTX  &920  & 14   &  66   & 1   & 0.01    & 0     &0\\ 
			TRV  &256  & 6   &  43   & 0   & 0    & 1     &0.4\\ 
			UNH  &964  & 12   &  80   & 1   & 0.04    & 2     &1.34\\ 
			V  &856  & 13   &  66   & 3   & 0.03    & 5     &1.47\\ 
			VZ  &508  & 12   &  42   & 6   & 0.65    & 1    &1.81\\ 
			WBA  &808  & 11   &  73   & 0   & 0    & 1     &0.14\\ 
			WMT  &810  & 12   &  68   & 3   & 0.2    & 3     &0.51\\ 
			XOM  &832  & 12   &  69   & 7   & 2.78    & 1     &0.85\\ 
			\hline
			Total &25502  & 366   &  69   & 94   & 1.03    & 56     &0.70\\ 
			\hline\hline
		\end{tabular}
	\end{center}
\end{table}
\newpage
\begin{table}[H]
	\caption{Summary statistics (quoting) of options on each stock in DJI by consolidating options related to the same underlying asset and expiration date.}
	\small
	\setlength\tabcolsep{3pt}
	\setlength{\textfloatsep}{3pt}
	\begin{center}
		\begin{tabular}{p{0.08\textwidth} | p{0.08\textwidth} p{0.08\textwidth} p{0.08\textwidth} | p{0.08\textwidth} p{0.08\textwidth} p{0.12\textwidth} p{0.12\textwidth} p{0.10\textwidth}}
			\hline
			\rule{0pt}{10pt}
			Stock &\#option series &\#orders in $\F$ & Frontier (\%) & avg call spread & avg put spread & improved call spread & improved put spread & \% spread reduced\\
			\hline\hline
			AAPL  & 1038  & 787   &  38   & 1.46   & 1.92    & 0.25     & 0.29   & 84\\ 
			AXP  & 476  & 440   &  46   & 0.37   & 0.3    & 0.15     & 0.13   & 58\\ 
			BA  & 734  & 684   & 47   & 1.18   & 0.56    & 0.34     &0.26  & 65\\ 
			CAT  & 482  & 503   &  52   & 0.73   & 0.41    & 0.22     &0.18   & 65\\ 
			CSCO  &608  & 616   & 51   & 1.27   & 0.45    & 0.1     &0.07   & 90\\ 
			CVX   &254  & 274   &  54   & 0.2   & 0.2    & 0.11     &0.09    & 51\\ 
			DD   &564  & 624   &  55   & 0.28   & 0.17    & 0.13     &0.1  & 49\\ 
			DIS  &518  & 515   &  50   & 0.91   & 0.79    & 0.13     &0.12  & 85\\ 
			GS   &738  & 631   &  43   & 0.91   & 0.61    & 0.24     &0.16  & 73\\ 
			HD   &688  & 675   &  49   & 1.69   & 1.53    & 0.33    &0.32   & 80\\ 
			IBM  &506  &429   &  42   & 1.54   & 0.76    & 0.31     &0.21    & 77\\ 
			INTC  &462  & 523   &  57   & 1.07   & 0.62    & 0.2     &0.16   & 79\\ 
			JNJ  &404  & 412   &  51   & 1.04   & 1.2    & 0.14     &0.13  & 88\\ 
			JPM  &524  & 472   &  45   & 0.62   & 0.57    & 0.11     &0.09   & 83\\ 
			KO  &370  & 350  &  47   & 0.34   & 0.17    & 0.05     &0.04   & 82\\ 
			MCD  &266  & 285   & 54   & 0.48   & 0.31    & 0.21     &0.16    & 53\\ 
			MMM  &336  & 379   &  56   & 0.74   & 0.56    & 0.22     &0.19   & 68\\ 
			MRK  &488  & 507   &  52   & 0.4   & 0.34    & 0.15    &0.14   & 62\\ 
			MSFT  &1120  & 833   & 37   & 2.2   & 0.96    & 0.33     &0.27   &  81\\ 
			NKE  &720  & 766   &  53   & 1.78   & 0.63    & 0.13     &0.08    & 91\\ 
			PFE  &284  & 304   &  54   &0.58   & 0.44    & 0.05    &0.05  & 90\\ 
			PG  &486  & 491   &  51   & 0.31   & 0.16    & 0.15     &0.11   & 44\\ 
			RTX  &822  & 831   &  51   & 1.74   & 1.5    & 0.93     &0.93  & 43\\ 
			TRV  &204  & 216   &  53   & 1.18   & 1.36    & 0.46     &0.49  & 63\\ 
			UNH  &750  & 618   &  41   & 2.05   & 1.23    & 0.54     &0.37  &72\\ 
			V  &416  & 413   &  50   & 0.51   & 0.43    & 0.22     &0.2  & 56\\ 
			VZ  &248  & 295   &  59   & 0.11   & 0.09    & 0.06    &0.06  & 43\\ 
			WBA  &744  & 641   &  43   &1.2   & 1.46    & 0.37     &0.38  & 72\\ 
			WMT  &494  & 475   &  48   & 0.47   & 0.25    & 0.14    &0.11  & 64\\ 
			XOM  &344  & 288   &  42   & 0.4   & 0.47    & 0.1     &0.11  &75\\ 
			\hline
			Total &16088  & 15277   &  49   & 0.93   & 0.68    & 0.23     &0.2   &73\\ 
			\hline\hline
		\end{tabular}
	\end{center}
\end{table}

\newpage
\subsection{Statistics of options on each stock using M.1 with $L=0$}
\begin{table}[H]
	\caption{Summary statistics (matching with $L=0$) of options on each stock in DJI by consolidating options related to the same underlying asset and expiration date.}
	\setlength\tabcolsep{3pt}
	\setlength{\textfloatsep}{3pt}
	\small
	\begin{center}
		\begin{tabular}{p{0.10\textwidth} | p{0.12\textwidth} p{0.15\textwidth} p{0.12\textwidth} | p{0.12\textwidth} p{0.12\textwidth}}
			\hline
			\rule{0pt}{10pt}
			Stock &\#markets &\#expirations &\#markets per expiration & \#matches ($L=0$) & average profit\\
			\hline\hline
			AAPL  & 1452  & 14   &  104   & 3   & 2.14    \\
			AXP  & 804  & 11   &  73   & 4   & 1.56    \\
			BA  & 1694  & 14   &  121   & 4   & 3.72    \\
			CAT  & 968  & 13   &  74   & 5   & 0.48    \\
			CSCO  &728  & 12   &  61   & 0   & 0    \\
			CVX   &742  & 12   &  62   & 5   & 0.36    \\
			DD   &778  & 13   &  60   & 2   & 0.03    \\
			DIS  &806  & 13   &  62   & 5   & 0.58    \\
			GS   &1110  & 12   &  93   & 3   & 7.3    \\
			HD   &908  & 13   &  70   & 1   & 0.29    \\
			IBM  &848  & 12   &  71   & 2   & 4.89    \\
			INTC  &662  & 12   &  55   & 1   & 0.01   \\
			JNJ  &850  & 12   &  71   & 2   & 0.08   \\
			JPM  &854  & 13   &  66   & 2   & 0.28   \\
			KO  &618  & 13  &  48   & 1   & 0.13   \\
			MCD  &692  & 12   & 58   & 2   & 0.43    \\
			MMM  &804  & 12   &  67   &1   & 1.36    \\
			MRK  &766  & 12   &  64   & 2   & 0.04    \\
			MSFT  &1194  & 13   & 92   & 0   & 0    \\
			NKE  &844  & 12   &  70   & 1   & 0.01    \\
			PFE  &640  & 12   &  53   & 3   & 0.15   \\
			PG  &786  & 12   &  66   & 3   & 0.48    \\
			RTX  &920  & 14   &  66   & 1   & 0.01   \\
			TRV  &256  & 6   &  43   & 0   & 0    \\
			UNH  &964  & 12   &  80   & 2   & 7.54    \\
			V  &856  & 13   &  66   & 5   & 3.5    \\
			VZ  &508  & 12   &  42   & 4   & 0.49   \\
			WBA  &808  & 11   &  73   & 0   & 0  \\
			WMT  &810  & 12   &  68   & 4   & 0.18 \\
			XOM  &832  & 12   &  69   & 6   & 1.16\\
			\hline
			Total &25502  & 366   &  69   & 74   & 1.54\\ 
			\hline\hline
		\end{tabular}
	\end{center}
\end{table}
\newpage
\begin{table}[H]
	\caption{Summary statistics (quoting with $L=0$) of options on each stock in DJI by consolidating options related to the same underlying asset and expiration date.}
	\small
	\setlength\tabcolsep{3pt}
	\setlength{\textfloatsep}{3pt}
	\begin{center}
		\begin{tabular}{p{0.08\textwidth} | p{0.08\textwidth} p{0.08\textwidth} p{0.08\textwidth} | p{0.08\textwidth} p{0.08\textwidth} p{0.12\textwidth} p{0.12\textwidth} p{0.10\textwidth}}
			\hline
			\rule{0pt}{10pt}
			Stock &\#option series &\#orders in $\F$ & Frontier (\%) & avg call spread & avg put spread & improved call spread & improved put spread & \% spread reduced\\
			\hline\hline
			AAPL  & 1120  & 902   &  40   & 1.61   & 2.05    & 0.69     & 1.1   & 51\\ 
			AXP  & 616  & 605   &  49   & 0.31   & 0.26    & 0.17     & 0.15   & 44\\ 
			BA  & 1208  & 1399   & 58   & 1.03   & 0.49   & 0.65     &0.35  & 34\\ 
			CAT  & 678  & 768   &  57   & 0.6   & 0.42    & 0.25     &0.19   & 57\\ 
			CSCO  &728  & 783   & 54   & 1.34   & 0.48    & 0.28     &0.14   & 77\\ 
			CVX   &456  & 537   &  59   & 0.48   & 0.58    & 0.23     &0.24    & 56\\ 
			DD   &666  & 845   &  63   & 0.28   & 0.2    & 0.16     &0.12  & 42\\ 
			DIS  &518  & 569   &  55   & 0.91   & 0.79    & 0.31    &0.24  & 68\\ 
			GS   &826  & 848   &  51   & 0.92   & 0.65   & 0.54     &0.3  & 46\\ 
			HD   &828  & 887   &  54   & 1.68   & 1.51    & 0.93    &0.83   & 45\\ 
			IBM  &696  &668   &  48   & 1.62   & 0.81    & 0.73     &0.34    & 56\\ 
			INTC  &568  & 701   &  62   & 1.02   & 0.61    & 0.32     &0.21   & 67\\ 
			JNJ  &672  & 779   &  58   & 0.99   & 1.16  & 0.39     &0.38  & 64\\ 
			JPM  &724  & 835   &  58   & 0.73   & 0.75  & 0.26     &0.25  & 66\\ 
			KO  &572  & 676  &  59   & 0.44   & 0.18    & 0.1     &0.07   & 73\\ 
			MCD  &558  & 710   & 64   & 0.63   & 0.34    & 0.35     &0.22    & 41\\ 
			MMM  &722  & 934   &  65   & 1.03   & 0.84    & 0.52     &0.44   & 49\\ 
			MRK  &672  & 791   &  59   & 0.36   & 0.28    & 0.17    &0.14   & 52\\ 
			MSFT  &1194  & 1003   & 42   & 2.19   & 0.97    & 1    &0.47   &  53\\ 
			NKE  &788  & 874   &  55   & 1.78   & 0.63    & 0.34     &0.14    & 80\\ 
			PFE  &480  & 612   &  64   &0.51   & 0.32    & 0.11    &0.09  & 76\\ 
			PG  &582  & 665  &  57   & 0.34   & 0.16    & 0.18     &0.12   & 40\\ 
			RTX  &822  & 931   &  57   & 1.74   & 1.5    & 1.33     &1.24  & 21\\ 
			TRV  &256  & 307   & 60   & 1.22   & 1.47    & 0.78     &0.82  & 41\\ 
			UNH  &806  & 722   &  45  & 2.02   & 1.22    & 1.15     &0.61  &46\\ 
			V  &580  & 689   &  59   & 0.45   & 0.38    & 0.27     &0.19  & 45\\ 
			VZ  &384  & 511   &  67   & 0.11   & 0.09    & 0.07    &0.06  & 35\\ 
			WBA  &808  & 783   &  48   &1.24   & 1.47    & 0.65     &0.77  &48\\ 
			WMT  &550  & 605   &  55   & 0.45   & 0.28    & 0.2    &0.14  & 53\\ 
			XOM  &440  & 459   &  52   & 0.4   & 0.5    & 0.5     &0.17  &64\\ 
			\hline
			Total &20518  & 22398   &  56   & 0.95   & 0.71    & 0.44     &0.35   &52 \\ 
			\hline\hline
		\end{tabular}
	\end{center}
\end{table}

\newpage
\section{Ethical Impact Statement}

The proposed financial market design has the potential to improve social welfare among traders and exchange owners by allowing traders to more precisely hedge their risks (e.g., in one transaction, purchase a put option on their exact investment portfolio), speculate on correlated movements among stocks, and produce an arbitrage surplus to split between the exchange and traders. Traders who currently specialize in identifying and exploiting arbitrage among related but independent markets may see their opportunities evaporate. To the extent that those specialists provide liquidity, exchanges may lose volume. With consolidated markets, profits go to informed traders. A consolidated or combinatorial market levels the playing field among traders so that information is primary, not computational power.

Trust is key in any market. An advantage of independent markets is their simplicity. A combinatorial market is in principal straightforward---it solves an optimization problem that can be openly published as a clear box---however, in practice, an average trader may have difficulty understanding what the optimization is doing. Complexity can lead to mistrust and conspiracy theories. The exchange must work hard to provide evidence that it is operating aboveboard, solving the exact optimization as promised. Publishing its algorithm and the de-identified orders is a good step toward accountability, but is not foolproof: the exchange could place shill orders to extract more surplus for itself. The exchange must educate traders and earn trust over time. The gain in social welfare should be enough to outweigh the downside of reduced trust.

In another sense, combinatorial markets are easier and more fair for naive traders than independent markets. Traders can order exactly what they want and know that they receive the best price even by consolidating orders from many markets in complex ways. Combinatorial markets neutralize the advantage of sophisticated traders who know how to manually consolidate orders on their own. In a zero-sum game like an options market, one trader's advantage is another trader's cost.

Allowing vastly more ways to bet on stock movements may hurt people who trade for %irrational reasons like
the excitement of gambling, especially those prone to addiction. Allowing bets on combinations can introduce opportunities for fraud. For example, a fraudster can bet on an unusual point spread between sports teams and, behind the scenes, bribe players to take actions that are hard to notice and do not materially impact their team's chances of winning games and tournaments. Transparency will be key to mitigate such fraud. When all trades can be inspected by authorities and even the public, even in pseudo-anonymous form, suspicious activity is harder to hide and more likely to be uncovered.

\end{document}